\numberwithin{equation}{section}
\newcommand{\R}{{\mathbb R}}
\newcommand{\C}{{\mathbb C}}
\newcommand{\N}{{\mathbb N}}
\newcommand{\Z}{{\mathbb Z}}
\renewcommand{\Re}{{\operatorname{Re\,}}}
\renewcommand{\Im}{{\operatorname{Im\,}}}
\newcommand{\Ai}{{\operatorname{Ai}}}
\newcommand{\Bi}{{\operatorname{Bi}}}
\newcommand{\I}{{\bold I}}
\newcommand{\al}{\alpha}
\newcommand{\be}{\beta}
\newcommand{\ga}{\gamma}
\newcommand{\supp}{\textrm{supp} \ }
\newcommand{\ep}{\varepsilon}
\newcommand{\de}{\delta}
\newcommand{\De}{\Delta}
\newcommand{\sg}{\sigma}
\newcommand{\Sg}{\Sigma}
\newcommand{\om}{\omega}
\newcommand{\Om}{\Omega}
\renewcommand{\th}{\theta}
\newcommand{\acal}{{\mathcal A}}
\newcommand{\mcal}{{\mathcal M}}
\newtheorem{theo}{{\sc \bf Theorem}}[section]
\newtheorem{lem}[theo]{{\sc \bf Lemma}}
\newtheorem{prop}[theo]{{\sc \bf Proposition}}
\newenvironment{rem}{\medskip\noindent{\it Remark:\/} }{\medskip}
\newenvironment{defin}{\medskip\noindent{\it Definition:\/} }{\medskip}
\begin{document}

\title[Uniform Asymptotics for Discrete Orthogonal Polynomials]
{Uniform Asymptotics for Discrete Orthogonal Polynomials with respect to Varying Exponential Weights on a Regular Infinite Lattice}

\thanks{The first author is supported in part
by the National Science Foundation (NSF) Grant DMS-0652005.}

\thanks{Both authors would like to thank the referee for his or her very important comment.}

\author{Pavel Bleher}
\address{Department of Mathematical Sciences,
Indiana University-Purdue University Indianapolis,
402 N. Blackford St., Indianapolis, IN 46202, U.S.A.}
\email{bleher@math.iupui.edu}

\author{Karl Liechty}
\address{Department of Mathematical Sciences,
Indiana University-Purdue University Indianapolis,
402 N. Blackford St., Indianapolis, IN 46202, U.S.A.}
\email{kliechty@math.iupui.edu}

\date{\today}

\begin{abstract} 
We consider the large $N$ asymptotics of a system of discrete orthogonal polynomials on an infinite regular lattice of mesh $\frac{1}{N}$, with weight $e^{-NV(x)}$, where $V(x)$ is a real analytic function with sufficient growth at infinity.  The proof is based on formulation of an interpolation problem for discrete orthogonal polynomials, which can be converted to a Riemann-Hilbert problem, and steepest descent analysis of this Riemann-Hilbert problem.
\end{abstract}

\maketitle

\section{Introduction}
For a given $N\in\N$, introduce the regular infinite lattice,
\begin{equation}\label{in1}
L_N=\left\{ x_{k,N}=\frac{k}{N}\,,\; k\in\Z\right\}.
\end{equation}
We consider polynomials orthogonal on $L_N$ with respect to the varying exponential weight
\begin{equation}\label{in2}
w_N(x)=e^{-NV(x)},
\end{equation}
where $V(x)$ is a real analytic function such that, for some $\ep >0$, $V$ has analytic extension into the strip
\begin{equation}
|\,\Im z| < \ep,
\end{equation}
and satisfies the growth condition
\begin{equation}\label{in3}
\frac{\Re V(z)}{\log(|z|^2+1)} \to + \infty \quad \textrm{as} \ |z| \to \infty, \ |\Im z| < \ep.
\end{equation}
More specifically, we introduce the system of monic orthogonal polynomials,
\[ 
P_n(x)=x^n+p_{n,n-1}x^{n-1}+\ldots+p_{n0},\quad n=0,1,\ldots,
\]
such that
\begin{equation}\label{in4}
\sum_{x\in L_N} P_m(x)P_n(x)w_N(x)=h_n\delta_{mn}\,,
\end{equation}
for some normalizing coefficients $h_n$.
Existence and uniqueness of this system of orthogonal polynomials is guaranteed by condition (\ref{in3}).
These orthogonal polynomials satisfy the three-term recurrence relation
\begin{equation}\label{in5}
xP_n(x)=P_{n+1}(x)+\be_n P_n(x)+\ga_{n}^2 P_{n-1}(x).
\end{equation}
We will explore the asymptotics of the quantities $\ga_n$, $\be_n,$ and $h_n$ for $n=N,N-1$, 
as well as pointwise asymptotics of the polynomials $P_N(x)$ as $N \to \infty$. 

The present work has the three predecessors:
\begin{enumerate}
  \item the work \cite{DKMVZ} of Deift, Kriecherbauer, McLaughlin, Venakides, and Zhou,  
in which the large $N$ asymptotics has been obtained for orthogonal polynomials with respect 
to varying exponential weights on the real line,
  \item the work \cite{BKMM} of Baik, Kriecherbauer, McLaughlin, and Miller, 
in which the large $N$ asymptotics has been obtained for orthogonal polynomials with respect 
to varying exponential weights on a lattice in a finite interval, and
  \item the work \cite{BL2} of Bleher and Liechty, 
in which the large $N$ asymptotics has been obtained for orthogonal polynomials with respect 
to the varying exponential weight $w_N(x)=e^{-N(|x|-\zeta x)}$ on the infinite lattice $L_N$.
\end{enumerate}
Also, a very important ingredient comes from the work \cite{Kui} of 
Kuijlaars, in which analytic properties of equilibrium measures with constraints are established. 
  
The asymptotic analysis of the polynomials $P_N(x)$ in this work 
will be based on the Interpolation Problem 
for discrete orthogonal polynomials, which is introduced in the work \cite{BoB} of 
Borodin and Boyarchenko (see also \cite{BKMM}, \cite{BL1}, \cite{BL2}). The asymptotic analysis
of $P_N(x)$ will consist of three steps. The first step will be a reduction of the Interpolation
Problem to a Riemann-Hilbert problem on a contour on the complex plane, which we accomplish following the general approach introduced in the paper \cite{M} of Miller and in the monograph \cite{KMM} of Kamvissis, McLaughlin, and Miller.  The second step will
be an application of the nonlinear steepest descent method of Deift and Zhou \cite{DZ} to
the Riemann-Hilbert problem under consideration, and the third and final step will be
a derivation of the asymptotic formulae both for the orthogonal polynomials $P_N(x)$
and for the recurrence coefficients. To apply the nonlinear steepest descent method 
to the orthogonal polynomials $P_N(x)$ we need to study the corresponding equilibrium measure.

\section{Equilibrium Measure}\label{equilibrium}
The significance of the equilibrium measure is that, as we will see, it gives the 
limiting distribution of zeros of the polynomial $P_N(x)$.
By definition, the equilibrium measure is a solution to a variational problem.  
Namely, let us consider the following set of probability measures on $\R^1$:
\begin{equation}\label{eq1}
\mcal=\{0\leq \nu\leq \sg,\;\;\nu(\R^1)=1\},
\end{equation}
where $\sg$ is the Lebesgue measure, and let us introduce the functional
\begin{equation}\label{eq2}
H(\nu)=\iint \log \frac{1}{|x-y|}d\nu (x)d\nu (y)+\int V(x)d\nu(x),\qquad \nu\in\mcal.
\end{equation}
The equilibrium measure minimizes this functional over some set of measures. 
In the case of continuous orthogonal polynomials, we minimize over the set 
of probability measures on the real line.  However, in the case of discrete 
orthogonal polynomials, we must introduce the upper constraint, $\nu\le\sg$, 
in order to account for an interlacing property of the zeroes of orthogonal polynomials.

It is a general fact, (see, e.g. \cite{Sze}) that for any system of polynomials orthogonal on the real line 
with respect to a real weight, the $n$th polynomial has $n$ real distinct distinct zeroes.  
Furthermore, the zeroes of a system of discrete orthogonal polynomials satisfy 
an interlacing property with regard to the location of the nodes of the lattice $L_N$, 
so that no more than one zero may lie between any pair of adjacent nodes.  
It therefore follows that, if we denote by $\mu_N$ the normalized counting measure 
on the zeroes of the $N$th orthogonal polynomial in our system,
\begin{equation}\label{eq3}
\mu_N(a,b)\leq b-a + \frac{1}{N} \quad \textrm{for any} \quad -\infty<a<b<\infty,
\end{equation}
so that $\mu\le\sg$, where $\mu = \lim_{N\to \infty} \mu_N$. 
With this constraint in mind, we define
\begin{equation}\label{eq4}
E_0=\inf_{\nu\in\mcal} H(\nu).
\end{equation}
It is possible to prove that there exists a unique minimizer $\nu_0$, so that  
\begin{equation}\label{eq5}
E_0=H(\nu_0),
\end{equation}
see, e.g., the works of Saff and Totik \cite{ST}, Dragnev and Saff \cite{DS} and Kuijlaars \cite{Kui}.
The minimizer is called the {\it equilibrium measure}. 

The equilibrium measure $\nu_0$ is uniquely determined by the {\it Euler-Lagrange variational conditions}:
there exists a {\it Lagrange multiplier} $l$ such that
\begin{equation}\label{eq6}
2\int \log|x-y| d\nu_0 (y) - V(x) \left\{
\begin{aligned}
&\geq l \quad \textrm{for}\quad x \in \supp \nu_0\\
&\leq l \quad \textrm{for}\quad x \in \supp (\sg-\nu_0),
\end{aligned}\right.
\end{equation}
see the works \cite{DM} of Deift and McLaughlin and \cite{DS}.
In particular,
\begin{equation}\label{eq7}
2\int \log|x-y| d\nu_0 (y) - V(x)=l
  \quad \textrm{for}\quad x \in \supp \nu_0\cap \supp (\sg-\nu_0).
\end{equation}
The equilibrium measure $\nu_0$ possesses a number of nice analytical properties, as shown by Kuijlaars
in \cite {Kui}. We will use these analytic properties, so let us discuss the results of \cite{Kui}.

First, observe that the constraint $\nu_0\le\sg$ implies the existence of the density,
\begin{equation}\label{eq8}
\rho(x)=\frac{d\nu_0}{dx}\,.
\end{equation}
We can partition $\R$ into the three sets
\begin{equation}\label{eq9}
\begin{aligned}
I^0&=\{x \in \R : 2\int \log|x-y| d\nu_0 (y) - V(x) = l\} \\
I^+&=\{x \in \R : 2\int \log|x-y| d\nu_0 (y) - V(x) > l\} \\
I^-&=\{x \in \R : 2\int \log|x-y| d\nu_0 (y) - V(x) < l\}. \\
\end{aligned}
\end{equation}
The structure of the equilibrium measure is well described in the following theorem of Kuijlaars, obtained in \cite{Kui}.
\begin{theo} (Kuijlaars)  For any real analytic potential $V(x)$ satisfying (\ref{in3}), the following hold:
\begin{enumerate}
  \item The density $\rho(x)$ of the constrained equilibrium measure $\nu_0$ (defined in (\ref{eq5})) is continuous.
  \item The sets $I^+$ and $I^-$ are both finite unions of open intervals.
  \item The density $\rho$ is real analytic on the open set $\{x : 0 < \rho(x) < 1 \}$.
  \item The density $\rho$ has the representation
  \begin{equation}\label{eq10}
\rho(x)=\frac{1}{\pi}\sqrt{q_1^+(x)} \quad \textrm{for} \quad x \in I^0 \cup I^-\,,
\end{equation}
where $q_1^+$ is the positive part of a function $q_1$ defined on $I^0 \cup I^-$ 
which is real analytic on the interior of $I^0 \cup I^-$.  The function $q_1$ is negative on $I^-$, so that 
\begin{equation}\label{eq11}
\rho(x)=0 \quad \textrm{for} \quad x \in  I^-\,,
\end{equation}
and it is nonnegative on $I^0$, so that
\begin{equation}\label{eq12}
\rho(x)=\frac{1}{\pi}\sqrt{q_1(x)} \quad \textrm{for} \quad x \in I^0 .
\end{equation}
  \item The density $\rho$ has the representation
  \begin{equation}\label{eq13}
\rho(x)=1 - \frac{1}{\pi}\sqrt{q_2^+(x)} \quad \textrm{for} \quad x \in I^0 \cup I^+\,,
\end{equation}
where $q_2^+$ is the positive part of a function $q_2$ defined on $I^0 \cup I^+$ 
which is real analytic on the interior of $I^0 \cup I^+$.  The function $q_2$ is negative 
on $I^+$, so that 
\begin{equation}\label{eq14}
\rho(x)=1 \quad \textrm{for} \quad x \in  I^+\,,
\end{equation}
and it is nonnegative on $I_0$, so that
\begin{equation}\label{eq15}
\rho(x)=1-\frac{1}{\pi}\sqrt{q_2(x)} \quad \textrm{for} \quad x \in I^0.
\end{equation}
\end{enumerate}
\end{theo}

\begin{rem} It follows from equations (\ref{eq12}) and (\ref{eq15}) that
\begin{equation}\label{eq16}
\frac{1}{\pi}\sqrt{q_1(x)}=1-\frac{1}{\pi}\sqrt{q_2(x)} \quad \textrm{for} \quad x \in I^0,
\end{equation}
hence $q_1$ and $q_2$ uniquely determine each other.
\end{rem}

Notice that, according to point (2) of this theorem, 
the connected components of $I^0$ are either closed intervals or isolated points.  
Since $\nu_0$ has compact support, we can write
\begin{equation}\label{eq17}
I^0=\bigsqcup_{j=1}^q [\al_j,\be_j]
\end{equation}
where
\begin{equation}\label{eq18}
\begin{aligned}
\al_j &\leq \be_j \quad &\textrm{for} \quad j&=1,\dots,q \\
\be_j&<\al_{j+1} \quad &\textrm{for} \quad j&=1,\dots,q-1.
\end{aligned}
\end{equation}
Notice that the intervals $(-\infty,\al_1)$ and $(\be_q,\infty)$ are components of $I^-$.  
The interval $(\be_j,\al_{j+1})$ for $1\leq j<q$ is a component of either $I^+$ or $I^-$.  
We therefore adopt the notation
\begin{equation}\label{eq19}
\begin{aligned}
\acal_v&=\bigg\{j\in \{1,\dots,q-1\} : (\be_j,\al_{j+1}) \subset I^-\bigg\} \\
\acal_s&=\bigg\{j\in \{1,\dots,q-1\} : (\be_j,\al_{j+1}) \subset I^+\bigg\}.
\end{aligned}
\end{equation}
We will call an equilibrium measure $\nu_0$ {\it regular} if the following hold:
\begin{enumerate}
  \item $q_1$ and $q_2$ are non-vanishing on the interior of $I^0$.
  \item $I^0$ contains no isolated points, so that $\al_j<\be_j$ for all $j=1,\ldots,q$.
  \item If $j \in \acal_v$, then $q_1'(\be_j) \neq 0 $ and $q_1'(\al_{j+1}) \neq 0$. Also, $q_1'(\al_1) \neq 0 $ and $q_1'(\be_q) \neq 0$.
  \item If $j \in \acal_s$, then $q_2'(\be_j) \neq 0 $ and $q_2'(\al_{j+1}) \neq 0$.
\end{enumerate}
For the remainder of this paper, we will assume that our equilibrium measure is regular.  In this case, the sets $I^0$, $I^+$ and $I^-$ are each finite unions of intervals, so that 
\begin{equation}\label{eq20}
-\infty<\al_1<\be_1<\al_2<\be_2<\cdots<\al_q<\be_q<\infty,
\end{equation}
and we classify these intervals as follows:

\begin{defin}A {\it void} is an open subinterval $(\be_j,\al_{j+1})$, $j\in\acal_v$, or one of the intervals $(-\infty, \al_1)$, $(\be_q,\infty)$.  The union of all voids is $I^-$.
\end{defin}

\begin{defin}A {\it saturated region} is an open subinterval $(\be_j,\al_{j+1})$, $j\in\acal_s$.  The union of all saturated regions is $I^+$.
\end{defin}

\begin{defin}A {\it band} is an open subinterval $(\al_j,\be_j)$, $j=1,\ldots,q$.  
The union of all bands is the interior of $I^0$.
\end{defin}

\begin{figure}
\vskip 1cm
\scalebox{0.7}{\includegraphics{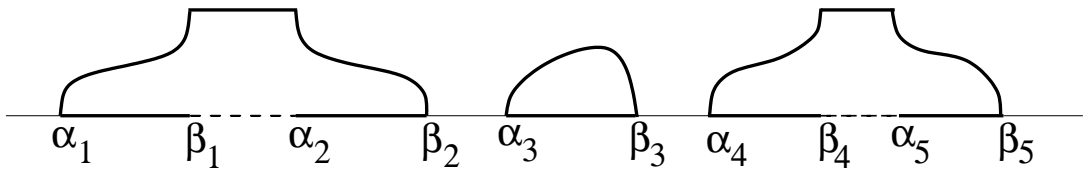}}
\caption{The graph of the density function for a hypothetical equilibrium measure with $q=5$.  Bands are denoted by bold segments, saturated regions by dashed segments, and voids by thin segments.}
\label{fig1}
\end{figure}

Observe that $\rho(x)=0$ on any void $(\be_j,\al_{j+1})$, $\rho(x)=1$ on any saturated
interval $(\be_j,\al_{j+1})$, and $0<\rho(x)<1$ on any band $(\al_j,\be_j)$, see Figure \ref{fig1}.
In addition, at the end-points of any band, $\rho(x)$ has a square-root singularity. Namely,
if $\al_j$ is a common end-point of a band and a void then as $x\to+0$,
\begin{equation}\label{eq21}
\begin{aligned}
\rho(\al_j+x)=C\sqrt x\,(1+O(x)),\qquad C=|q_1'(\al_j)|^{1/2}>0.
\end{aligned}
\end{equation}
and 
if $\al_j$ is a common end-point of a band and a saturated region then as $x\to+0$,
\begin{equation}\label{eq22}
\begin{aligned}
\rho(\al_j+x)=1-C\sqrt x\,(1+O(x)),\qquad C=|q_2'(\al_j)|^{1/2}>0.
\end{aligned}
\end{equation}
Similarly, if $\be_j$ is a common end-point of a band and a void then as $x\to+0$,
\begin{equation}\label{eq23}
\begin{aligned}
\rho(\be_j-x)=C\sqrt x\,(1+O(x)),\qquad C=|q_1'(\be_j)|^{1/2}>0.
\end{aligned}
\end{equation}
and 
if $\be_j$ is a common end-point of a band and a saturated region then as $x\to+0$,
\begin{equation}\label{eq24}
\begin{aligned}
\rho(\be_j-x)=1-C\sqrt x\,(1+O(x)),\qquad C=|q_2'(\be_j)|^{1/2}>0.
\end{aligned}
\end{equation}

In the next section we introduce the $g$-function, which will be our means of exploiting the equilibrium measure.

\section{The $g$-function}
Define the $g$-function on $\C \setminus (-\infty, \be_q]$ as 
\begin{equation}\label{g1}
g(z)=\int_{\al_1}^{\be_q} \log(z-x)d\nu_0(x)
\end{equation}
where we take the principal branch for the logarithm.  Also, introduce the numbers $\Om_j$ for $j=1,\dots,q-1$ as
\begin{equation}\label{eq25a}
\begin{aligned}
\Omega_j&=\left\{
\begin{aligned}
&2\pi \int_{\al_{j+1}}^{\be_q} \rho(x)dx \quad \textrm{for} \quad j\in \acal_v \\
&2\pi \int_{\al_{j+1}}^{\be_q} \rho(x)dx+2\pi\al_{j+1} \quad \textrm{for} \quad j\in \acal_s.
\end{aligned}\right. \\
\end{aligned}
\end{equation}

{\bfseries Properties of \( g(z)\):}
\begin{enumerate}
\item \( g(z)\) is analytic in \(\C\setminus
    (-\infty,\be_q]\).
\item For large $z$,
\begin{equation}\label{g2}
 g(z)=\log z-\sum_{j=1}^\infty \frac{g_j}{z^j}\,,\qquad 
g_j=\int_{\al_1}^{\be_q} \frac{x^j}{j}\,d\nu_0(x).
\end{equation}
\item 
\begin{equation}\label{g3}
 g'(z)=\int_\R \frac{\rho(x)dx}{z-x}
\end{equation}
is the resolvent of the equilibrium measure.
\item From (\ref{eq9}), we have that 
\begin{equation}\label{g5}
g_+(x)+g_-(x)\,
\left\{
\begin{aligned}
&=V(x)+l \quad \textrm{for}\quad x \in I^0 \\
&> V(x)+l \quad \textrm{for}\quad x \in I^+ \\
&< V(x)+l \quad \textrm{for}\quad x \in I^-,
\end{aligned}
\right.
\end{equation}
where $g_+$ and $g_-$ refer to the limiting values from the upper and lower half-planes, respectively.

\item Equation (\ref{g1}) implies that the function
\begin{equation}\label{g6}
 G(x)\equiv g_+(x)-g_-(x)
\end{equation}
is pure imaginary for all real 
\( x\), and 
\begin{equation}\label{g7}
G(x)=2\pi i\int_x^{\be_q}\rho(s)\,ds.
\end{equation} 
Thus
\begin{equation}\label{g8}
G(x)=\left\{
\begin{aligned}
&i\Omega_j \quad \textrm{for} \quad \be_j<x<\al_{j+1}, \quad \textrm{and} \quad j\in \acal_v \\
&i\Omega_j-2\pi ix \quad \textrm{for} \quad \be_j<x<\al_{j+1}, \quad \textrm{and} \quad j\in \acal_s.
\end{aligned}
\right.
\end{equation}
From (\ref{g5}) and (\ref{g7}) we obtain that
\begin{equation}\label{g9}
 2g_{\pm}(x)=
V(x)+l\pm 2\pi i \int_x^{\be_q} \rho(s)ds\quad\textrm{for} \quad x \in I^0.
\end{equation}
\item 
Also, from (\ref{g7}), we get that $G(x)$ is real analytic on the sets $I^+$, $I^-$, and on the interior of $I^0$.  We can therefore extend $G$ into a complex neighborhood of any interval of analyticity for $\rho$, and the Cauchy-Riemann equations imply that 
\begin{equation}\label{g10}
 \left. \frac{dG(x+iy)}{dy}\right|_{y=0}=2\pi \rho(x)\ge 0.
\end{equation}
\end{enumerate}
Observe that from (\ref{g5}) we have that
\begin{equation}\label{g11}
 G(x)=2g_+(x)-V(x)-l=-[2g_-(x)-V(x)-l],\quad x \in I^0.
\end{equation}

\medskip

\section{Main Results}\label{results}
In this section, we summarize the main results of the paper.  In order to do so, we must first introduce some notations.  
Introduce the numbers $\Om_{j,N}$ for $j=0,\dots,q$ as
\begin{equation}\label{eq25}
\begin{aligned}
\Om_{j,N}&=\left\{
  \begin{aligned}
&N\Om_j \quad \textrm{for} \quad j\in \acal_v  \\
&\pi+N\Omega_j \quad \textrm{for} \quad j\in \acal_s \\
&2\pi N \quad \textrm{for} \quad j=0 \\
&0 \quad \textrm{for} \quad j=q.
\end{aligned}\right.
\end{aligned}
\end{equation}
and the vector
\begin{equation}\label{eq26}
\Om_N=(\Om_{1,N},\dots,\Om_{q-1,N}).
\end{equation}
Let 
\begin{equation}\label{m3}
R(z)\equiv \prod_{j=1}^q (z-\al_j)(z-\be_j)
\end{equation}
and let $X$ be the two-sheeted Riemann surface of genus $g\equiv q-1$ associated with $\sqrt {R(z)}$ with cuts on the intervals $(\al_j,\be_j)$.  We fix the first sheet of $X$ by the condition
\begin{equation}\label{m4}
\sqrt {R(z)} > 0 \quad \textrm{for} \quad z > \be_q
\end{equation}
on the first sheet.

Introduce the following homology basis on $X$.  For any $j \in \{1,\cdots,q-1\}$, let $A_j$ be a cycle enclosing the interval $(\be_j,\al_{j+1})$ (passing through the intervals $(\al_j,\be_j)$ and $(\al_{j+1},\be_{j+1})$), oriented clockwise, such that the piece of $A_j$ which lies in the upper half-plane also lies on the first sheet of $X$, while the piece of $A_j$ which lies in the lower half plane also lies on the second sheet of $X$.  Also for any $j \in \{1,\cdots,q-1\}$, let $B_j$ be a cycle enclosing the interval $(\al_1,\be_j)$ (passing through the intervals $(-\infty,\al_1)$ and $(\be_j,\al_{j+1})$), oriented clockwise, and lying entirely on the first sheet of $X$.  Then the cycles $(A_1,\dots,A_{q-1},B_1,\dots.B_{q-1})$ form a canonical homology basis for $X$.

Now consider the the $g$-dimensional complex linear space $\Om$ of holomorphic one-forms on $X$,
\begin{equation}\label{m5}
\Om=\left\{\om=\sum_{j=0}^{q-2}\frac{c_j z^j dz}{\sqrt{R(z)}}\right\},
\end{equation}
and the basis 
\begin{equation}\label{m6}
\om=(\om_1,\dots,\om_{q-1})
\end{equation}
normalized such that 
\begin{equation}\label{m7}
\int_{A_j} \om_k = \delta_{jk}.
\end{equation}
Notice that the basis $\om$ is real.  That is, for the basis elements
\begin{equation}\label{m8}
\om_j=\sum_{k=1}^{q-1}\frac{c_{jk} z^{k-1} dz}{\sqrt{R(z)}},
\end{equation}
the coefficients $c_{jk}$ are real.

Now define the associated matrix of $B$-periods as
\begin{equation}\label{m9}
\tau=(\tau_{jk}), \qquad \tau_{jk}=\int_{B_j} \om_k, \qquad j,k=1,\dots,q-1.
\end{equation}
Since $\sqrt{R(z)}$ is pure imaginary on the intervals $(\al_j,\be_j)$, the numbers $\tau_{jk}$ are pure imaginary.  Furthermore, the matrix $\tau$ is symmetric and the matrix $-i\tau$ is positive definite (see \cite{FK}).

We now define the Riemann theta function associated with $\tau$ as
\begin{equation}\label{m10}
\th(s)=\sum_{m\in \Z^g} e^{2\pi i (m,s)+\pi i (m,\tau m)}, \qquad s\in \C^g,
\end{equation}
where $(m,s)=\sum_{j=1}^{q-1} m_j s_j$.  Because the quadratic form $i(m,\tau m)$ is negative definite, the sum in (\ref{m10}) is absolutely convergent for all $s \in \C^g$, and thus $\th(s)$ is an entire function in $\C^g$.  Notice that the theta function is an even function and satisfies the periodicity properties
\begin{equation}\label{m11}
\th(s+e_j)=\th(s), \qquad \th(s+\tau_j)=e^{-2\pi i s_j -\pi i \tau_{jj}}\th(s)
\end{equation}
where $e_j=(0,\dots,1,\dots,0)$ is the $j^{th}$ canonical basis vector in $\C^g$, and $\tau_j=\tau e_j$.

Introduce now the vector valued function
\begin{equation}\label{m12}
u(z)=\int_{\be_q}^z \om, \quad \textrm{for} \quad z\in \C\setminus (\al_1,\be_q),
\end{equation}
where $\om=(\om_1,\dots,\om_g)$ is defined in (\ref{m6}), and the contour of integration lies in $\C\setminus (\al_1,\be_q)$ on the first sheet of $X$.  Notice that $u(z)$ is well defined as a function with values in $\C^g / \Z^g$ except on the interval $(\al_1,\be_q)$, where it takes limiting values from the upper and lower half-planes.

Introduce also the function
\begin{equation}\label{m2b}
\ga(z)=\prod_{j=1}^q\left(\frac{z-\al_j}{z-\be_j}\right)^{1/4}
\end{equation}
with cuts on $I^0$, taking the branch such that $\ga(z) \sim 1$ as $z\to \infty$.  It can be seen that, on the first sheet of $X$, the function $\ga-\frac{1}{\ga}$ has exactly one zero in each of the intervals $(\be_j,\al_{j+1})$, and is non-zero elsewhere, and that the function $\ga+\frac{1}{\ga}$ has no zeroes on the first sheet of $X$.  Define the numbers $x_j$ as
\begin{equation}\label{m2c}
x_j \in (\be_j,\al_{j+1}), \quad \ga(x_j)-\frac{1}{\ga(x_j)}=0.
\end{equation}
Define the vector of Riemann constants 
\begin{equation}\label{m24}
K \equiv -\sum_{j=1}^{q-1} u(\be_j)
\end{equation}
and the vector 
\begin{equation}\label{m25}
d \equiv -K+\sum_{j=1}^{q-1} u(x_j).
\end{equation}
Then
\begin{equation}\label{m26}
\th(u(x_j)-d)=0 \quad \textrm{for} \quad  j\in\{1,\dots,q-1\},
\end{equation}
and $\{x_j\}_{j=1}^g$ give all the zeroes of the function $\th(u(z)-d)$.  In addition, the function $\th(u(z)+d)$ has no zeroes on the first sheet of $X$. 

Finally, for $j=1,\dots q$, introduce the functions
\begin{equation}\label{m27}
\psi_{\al_j}(z)=-\left\{\frac{3\pi}{2}\int_{\al_j}^z \rho(t)dt\right\}^{2/3}, \quad \psi_{\be_j}(z)=-\left\{\frac{3\pi}{2}\int_z^{\be_j} \rho(t)dt\right\}^{2/3},
\end{equation}
and the functions
\begin{equation}\label{m28}
\begin{aligned}
\mcal_1(z)&=\frac{\th(u(\infty)+d)}{\th(u(\infty)+\frac{\Om_N}{2\pi}+d)}\frac{\ga(z)+\ga(z)^{-1}}{2}\frac{\th(u(z)+\frac{\Om_N}{2\pi}+d)}{\th(u(z)+d)} \\
\mcal_2(z)&=\frac{\th(u(\infty)+d)}{\th(u(\infty)+\frac{\Om_N}{2\pi}+d)}\frac{\ga(z)-\ga(z)^{-1}}{2}\frac{\th(u(z)-\frac{\Om_N}{2\pi}-d)}{\th(u(z)-d)}.
\end{aligned}
\end{equation}
Notice that $\mcal_1$ and $\mcal_2$ depend quasiperiodically on $N$, thus are $O(1)$ as $N \to \infty$.  

The asymptotics of the normalizing constants in equation (\ref{in4}) and of the recurrence coefficients in equation (\ref{in5}) are presented in the following theorem.

\begin{theo}\label{coeff}(Asymptotics of recurrence coefficients) Let $V(x)$ be a real analytic function satisfying (\ref{in3}) which yields a regular equilibrium measure (\ref{eq5}), and let $\{P_n\}_{n=0}^\infty$ be the system of orthogonal polynomials defined according to (\ref{in4}).  Then as $N \to \infty$, the normalizing constants in (\ref{in4}) and recurrence coefficients in (\ref{in5}) admit the following asymptotic expansions.  

\begin{equation}\label{mr1} 
h_N = \frac{N\pi}{2}e^{Nl}\left(\sum_{j=1}^q(\be_j-\al_j)\right)\frac{\th(u(\infty)+d)\th(u(\infty)-\frac{\Om_N}{2\pi}-d)}{\th(u(\infty)-d)\th(u(\infty)+\frac{\Om_N}{2\pi}+d)}\left[1+O\left(\frac{1}{N}\right)\right],
\end{equation}
\begin{equation}\label{mr2} 
h_{N-1} = 8N\pi e^{Nl}\left(\sum_{j=1}^q(\be_j-\al_j)\right)^{-1}\frac{\th(u(\infty)-d)\th(u(\infty)-\frac{\Om_N}{2\pi}+d)}{\th(u(\infty)+d)\th(u(\infty)+\frac{\Om_N}{2\pi}-d)}\left[1+O\left(\frac{1}{N}\right)\right],
\end{equation}
\begin{equation}\label{mr3} 
\ga_N^2 = \left(\frac{\sum_{j=1}^q(\be_j-\al_j)}{4}\right)^2\frac{\th(u(\infty)+d)^2\th(u(\infty)-\frac{\Om_N}{2\pi}-d)\th(u(\infty)+\frac{\Om_N}{2\pi}-d)}{\th(u(\infty)-d)^2\th(u(\infty)+\frac{\Om_N}{2\pi}+d)\th(u(\infty)-\frac{\Om_N}{2\pi}+d)}+O\left(\frac{1}{N}\right),
\end{equation}
\begin{equation}\label{mr4}
\begin{aligned}
\be_{N-1}&=\frac{\sum_{j=1}^q(\be_j^2-\al_j^2)}{2\sum_{j=1}^q(\be_j-\al_j)}+\left(\frac{\nabla \th(u(\infty)+\frac{\Om_N}{2\pi}-d)}{\th(u(\infty)+\frac{\Om_N}{2\pi}-d)}-\frac{\nabla \th(u(\infty)+\frac{\Om_N}{2\pi}+d)}{\th(u(\infty)+\frac{\Om_N}{2\pi}+d)} \right.\\
&\qquad \left.+\frac{\nabla \th(u(\infty)+b)}{\th(u(\infty)+b)}-\frac{\nabla \th(u(\infty)-d)}{\th(u(\infty)-d)},u'(\infty)\right)+O\left(\frac{1}{N}\right).
\end{aligned}
\end{equation}
where $\nabla \th$ is the gradient of $\th$,
\begin{equation}
u'(\infty)=(c_{1,q-1},c_{2,q-1},\dots,c_{q-1,q-1}),
\end{equation}
and the numbers $c_{jk}$ are defined in (\ref{m8}).
\end{theo}
Notice that, up to the lattice scaling factor $N$ in the normalizing coefficients, these asymptotics are similar to the results obtained in \cite{DKMVZ} for continuous orthogonal polynomials.

The remaining theorems in this section present pointwise asymptotics of the polynomials $P_N(z)$ in various regions of the real line and complex plane.
\begin{theo}\label{voids}(Asymptotics of $P_N(z)$ in voids)  Let $K\subset\C$ be a
compact set on the complex plane such that $K$ does not intersect with the support of the equilibrium measure $\nu_0$.
Then for any $z\in K$, we have that
\begin{equation}\label{mr5}
P_N(z)=e^{Ng(z)}\left[\mcal_1(z)+O(N^{-1})\right].   
\end{equation}
The error term $O(N^{-1})$ is uniform in $K$.
\end{theo}

The function $e^{Ng(z)} \mcal_1(z)$ is analytic in a neighborhood of any compact subset of any void, thus this formula gives asymptotics of $P_N(x)$ for $x$ in a void.  In particular, notice that this function has no zeroes in the exterior intervals $(-\infty,\al_1)$ and $(\be_q,\infty)$, and at most one zero in any other void.

\begin{theo}\label{bands}(Asymptotics of $P_N(z)$ in bands)  Let $K$ be a compact
subset of the interior of $I^0$. Then for any point $x\in K$, we have that
\begin{equation}\label{mr6a}
P_N(x)=2e^{\frac{N}{2}(V(x)+l)}\left[\Re \left(e^{iN\pi \phi(x)} \mcal_{1+}(x)\right)+O(N^{-1})\right],
\end{equation}
where $\mcal_{1+}(x)$ refers to the limiting value of the function $\mcal_{1}(z)$ from the upper half-plane, and
\begin{equation}\label{mr7}
\phi(x):=\int_x^{\be_q} \rho(t)dt.
\end{equation}
The error term $O(N^{-1})$ is uniform in $K$.
\end{theo}

\begin{theo}\label{satreg}(Asymptotics of $P_N(z)$ in saturated regions)
 Let $K$ be a compact
subset of $I^+$.
Then the exists $\ep >0$ such that, for any point $x\in K$, we have that
\begin{equation}\label{mr8}
P_N(x)=e^{NL(x)}\left[2\sin(N\pi x)\left(\Im \left(e^{\frac{iN\Om_{j}}{2}} \mcal_{1+}(x)\right)+O(N^{-1})\right)+O(e^{-N\ep})\right],
\end{equation}
where $\mcal_{1+}(x)$ refers to the limiting value of the function $\mcal_{1}(z)$ from the upper half-plane, and
\begin{equation}\label{mr9}
L(x):=\int_{\al_1}^{\be_q} \log|x-t|\rho(t)dt.
\end{equation}
Both of the error terms, $O(N^{-1})$ and $O(e^{-N\ep})$, are uniform in $K$.
\end{theo}

The remaining theorems in this section use the Airy functions $\Ai$ and $\Bi$ (see, e.g. \cite{Ol}).

\begin{theo}\label{band_void}(Asymptotics of $P_N(z)$ at band-void edge points)
Let $j \in \mathcal{A}_v\cup\{ q\}$, so that the point $\be_j$ is the right endpoint of a band 
and the left endpoint of a void. Then there exists $\ep>0$ such that, for $|z-\be_j| < \ep$,
\begin{equation}\label{mr10}
\begin{aligned}
P_N(z)&=e^{\frac{N}{2}(V(z)+l)}\left\{N^{1/6}\psi_{\be_j}(z)^{1/4}\Ai(N^{2/3}\psi_{\be_j}(z))\left[e^{\pm  \frac{i\Om_{j,N}}{2}}\mcal_1(z)+e^{\mp  \frac{i\Om_{j,N}}{2}}\mcal_2(z)+O(N^{-1})\right]\right. \\
&\quad \left.- N^{-1/6}\psi_{\be_j}(z)^{-1/4}\Ai ' (N^{2/3}\psi_{\be_j}(z))\left[e^{\pm  \frac{i\Om_{j,N}}{2}}\mcal_1(z)-e^{\mp  \frac{i\Om_{j,N}}{2}}\mcal_2(z)+O(N^{-1})\right]\right\}
\end{aligned}
\end{equation}
for $\pm \Im z >0$.

Let $j \in \mathcal{A}_v\cup\{ 0\}$, so that the point $\al_{j+1}$ is the left endpoint of a band and the right endpoint of a void.  There exists $\ep>0$ such that, for $|z -\al_{j+1}|<\ep$,
\begin{equation}\label{mr12}
\begin{aligned}
P_N(z)&=e^{\frac{N}{2}(V(z)+l)}\left\{N^{1/6}\psi_{\al_{j+1}j}(z)^{1/4}\Ai(N^{2/3}\psi_{\al_{j+1}}(z))\left[e^{\pm  \frac{i\Om_{j,N}}{2}}\mcal_1(z)-e^{\mp  \frac{i\Om_{j,N}}{2}}\mcal_2(z)+O(N^{-1})\right]\right. \\
&\quad \left. - N^{-1/6}\psi_{\al_{j+1}}(z)^{-1/4}\Ai ' (N^{2/3}\psi_{\al_{j+1}}(z))\left[e^{\pm  \frac{i\Om_{j,N}}{2}}\mcal_1(z)+e^{\mp  \frac{i\Om_{j,N}}{2}}\mcal_2(z)+O(N^{-1})\right]\right\}
\end{aligned}
\end{equation}
for $\pm \Im z >0$.
\end{theo}

\begin{theo}\label{band_sat-reg}(Asymptotics of $P_N(z)$ at band-saturated region edge points)
Let $j \in \mathcal{A}_s$.  Then the point $\be_j$ is the right endpoint of a band and the left endpoint of a saturated region.  There exists $\ep>0$ such that, for $|z-\be_j|<\ep$,
\begin{equation}\label{mr13}
\begin{aligned}
P_N(z)&=e^{\frac{N}{2}(V(z)+l)}\bigg\{N^{1/6}\psi_{\be_j}(z)^{1/4}\mathcal{B}_1(z)\left[-e^{\pm  \frac{i\Om_{j,N}}{2}}\mcal_1(z)+e^{\mp  \frac{i\Om_{j,N}}{2}}\mcal_2(z)+O(N^{-1})\right] \\
&\quad - N^{-1/6}\psi_{\be_j}(z)^{-1/4}\mathcal{B}_2(z)\left[e^{\pm  \frac{i\Om_{j,N}}{2}}\mcal_1(z)+e^{\mp  \frac{i\Om_{j,N}}{2}}\mcal_2(z)+O(N^{-1})\right]\bigg\}
\end{aligned}
\end{equation}
for $\pm \Im z >0$, where
\begin{equation}\label{mr14}
\begin{aligned}
\mathcal{B}_1(z)&=\cos(N\pi z)\Ai(N^{2/3}\psi_{\be_j}(z))+\sin(N\pi z)\Bi(N^{2/3}\psi_{\be_j}(z)), \\
\mathcal{B}_2(z)&=\cos(N\pi z)\Ai '(N^{2/3}\psi_{\be_j}(z))+\sin(N\pi z)\Bi '(N^{2/3}\psi_{\be_j}(z)).
\end{aligned}
\end{equation}

The point $\al_{j+1}$ is the left endpoint of a band and the right endpoint of a void.  There exists $\ep>0$ such that, for $|z -\al_{j+1}|<\ep$,
\begin{equation}\label{mr15}
\begin{aligned}
P_N(z)&=e^{\frac{N}{2}(V(z)+l)}\left\{N^{1/6}\psi_{\al_{j+1}j}(z)^{1/4}\mathcal{B}_3(z)\left[e^{\pm  \frac{i\Om_{j,N}}{2}}\mcal_1(z)+e^{\mp  \frac{i\Om_{j,N}}{2}}\mcal_2(z)+O(N^{-1})\right]\right. \\
&\quad \left. - N^{-1/6}\psi_{\al_{j+1}}(z)^{-1/4}\mathcal{B}_4(z)\left[e^{\pm  \frac{i\Om_{j,N}}{2}}\mcal_1(z)-e^{\mp  \frac{i\Om_{j,N}}{2}}\mcal_2(z)+O(N^{-1})\right]\right\}
\end{aligned}
\end{equation}
for $\pm \Im z >0$, where
\begin{equation}\label{mr16}
\begin{aligned}
\mathcal{B}_3(z)&=\cos(N\pi z)\Ai(N^{2/3}\psi_{\al_{j+1}}(z))-\sin(N\pi z)\Bi(N^{2/3}\psi_{\al_{j+1}}(z)), \\
\mathcal{B}_4(z)&=\cos(N\pi z)\Ai '(N^{2/3}\psi_{\al_{j+1}}(z))-\sin(N\pi z)\Bi '(N^{2/3}\psi_{\al_{j+1}}(z)).
\end{aligned}
\end{equation}
\end{theo}

\begin{rem}
Although the above theorems are presented for real analytic potential $V(x),$ these results may be extended to potentials which are continuous and piecewise real analytic, assuming that the points of non-analyticity lie strictly within saturated regions and voids.  In this case the preceding results hold, and the asymptotic solution to the associated Riemann-Hilbert Problem does not require local analysis near the points of non-analyticity (see \cite{BL2}).
\end{rem}

Before continuing with the proofs of these theorems, we would also like to remark that the results obtained in this paper match the results obtained in \cite{BKMM} for polynomials orthogonal on a lattice which sits inside a finite interval.  Consequently, many corollaries discussed in \cite{BKMM} also apply to infinite lattices.  In particular, the authors of \cite{BKMM} discuss the particle statistics of the  {\it discrete orthogonal polynomial ensemble} in different regions of a finite interval of the real line, which are based on asymptotic properties of the associated orthogonal polynomials.  The results of this paper imply that their results may be extended to discrete orthogonal polynomial ensembles on an infinite (regular) lattice.  Of particular interest may be the {\it discrete sine kernel} as the scaling limit of the reproducing kernel in the interior of bands, the {\it Airy kernel} as the scaling limit of the reproducing kernel at band end-points, the {\it Tracy-Widom distribution} for the location of the left- and right-most particle, and exponential estimates for all correlation functions in voids and saturated regions.

The rest of the paper is organized as follows.  In Section \ref{IP}, we reformulate the orthogonal polynomials (\ref{in3}) as the solution to an interpolation problem of complex analysis.  In Section \ref{red}, we reduce the interpolation problem to a Riemann-Hilbert Problem which can be solved by steepest descent analysis, which is done in Sections \ref{ft}-\ref{tt}.  Finally, in Section \ref{proof}, we give proofs of the preceding theorems.




\section{Interpolation Problem}\label{IP}
We will evaluate the asymptotics of the discrete orthogonal polynomials described above via a steepest descent asymptotic analysis of a Riemann-Hilbert problem.  To that end, we first introduce the following interpolation problem.  

{\bf Interpolation Problem}. For a given $N=0,1,\ldots$, find a $2\times 2$ matrix-valued function
$\bold P_N(z)=(\bold P_N(z)_{ij})_{1\le i,j\le 2}$ with the following properties:
\begin{enumerate}
\item
{\it Analyticity}: $\bold P_N(z)$ is an analytic function of $z$ for $z\in\C\setminus L_N$.
\item
{\it Residues at poles}: At each node $x\in L_N$, the elements $\bold P_N(z)_{11}$ and
$\bold P_N(z)_{21}$ of the matrix $\bold P_N(z)$ are analytic functions of $z$, and the elements $\bold P_N(z)_{12}$ and
$\bold P_N(z)_{22}$ have a simple pole with the residues,
\begin{equation} \label{IP1}
\underset{z=x}{\rm Res}\; \bold P_N(z)_{j2}=w_N(x)\bold P_N(x)_{j1},\quad j=1,2.
\end{equation}
\item
{\it Asymptotics at infinity}: There exists a function $r(x)>0$ on  $L_N$ such that 
\begin{equation} \label{IP2a}
\lim_{x\to\infty} r(x)=0,
\end{equation} 
and such that as $z\to\infty$, $\bold P_N(z)$ admits the asymptotic expansion,
\begin{equation} \label{IP2}
\bold P_N(z)\sim \left( \I+\frac {\bold P_1}{z}+\frac {\bold P_2}{z^2}+\ldots\right)
\begin{pmatrix}
z^N & 0 \\
0 & z^{-N}
\end{pmatrix},\qquad z\in \C\setminus \left[\bigcup_{x\in L_N}^\infty D\big(x,r(x)\big)\right],
\end{equation}
where $D(x,r(x))$ denotes a disk of radius $r(x)>0$ centered at $x$ and $\I$ is the identity matrix.
\end{enumerate}

It is not difficult to see (see \cite{BoB} and \cite{BKMM}) that 
the IP has a unique solution, which is
\begin{equation} \label{IP3}
\bold P_N(z)=
\begin{pmatrix}
P_N(z) & C(w_N P_N)(z) \\
(h_{N-1})^{-1}P_{N-1}(z) & (h_{N-1})^{-1}C(w_NP_{N-1})(z)
\end{pmatrix},
\end{equation}
where the Cauchy transformation $C$ is defined by the formula,
\begin{equation} \label{IP4}
C(f)(z)=\sum_{x\in L_N}\frac{f(x)}{z-x}\,.
\end{equation}
Because of the orthogonality condition, as $z\to\infty$,
\begin{equation} \label{IP5}
C(w_NP_n)(z)=\sum_{x\in L_N}\frac{w_N(x)P_n(x)}{z-x}
\sim \sum_{x\in L_N} w_N(x)P_n(x)\sum_{j=0}^\infty \frac{x^j}{z^{j+1}}=\frac{h_n}{z^{n+1}}
+\sum_{j=n+2}^\infty \frac{a_j}{z^j}\,,
\end{equation}
which justifies asymptotic expansion (\ref{IP2}), and have that 
\begin{equation} \label{IP6}
h_N=[\bold P_1]_{12},\qquad h_{N-1}^{-1}=[\bold P_1]_{21}.
\end{equation}
Furthermore, the recurrence coefficients in equation (\ref{in5}) are given by
\begin{equation}\label{IP7}
\ga_N^2=[\bold P_1]_{12}[\bold P_1]_{21} \quad ; \quad \be_{N-1}=\frac{[\bold P_2]_{21}}{[\bold P_1]_{21}}-[\bold P_1]_{11}.
\end{equation}

\section{Reduction of IP to RHP}\label{red}
We would like to reduce the Interpolation Problem to a Riemann-Hilbert Problem (RHP).  Introduce the function
\begin{equation}\label{red1}
\Pi(z)=\frac{\sin{N\pi z}}{N\pi}.
\end{equation}
Notice that
\begin{equation}\label{red2}
\Pi(x_k)=0, \quad \Pi'(x_k)=\exp\left(iN\pi x_k\right)=(-1)^k, \quad \textrm{for} \quad x_k=\frac{k}{N} \in L_N.
\end{equation}

Introduce the upper triangular matrices,
\begin{equation} \label{redp4}
\bold D^u_{\pm}(z)
=\begin{pmatrix}
1 & -\frac{w_N(z)}{\Pi(z)}e^{\pm iN \pi z}   \\
0 & 1
\end{pmatrix},
\end{equation}
and the lower triangular matrices,
\begin{equation} \label{redp5}
\bold D^l_{\pm}
=\begin{pmatrix}
\Pi(z)^{-1} & 0   \\
-\frac{1}{w_N(z)}e^{\pm iN\pi z} & \Pi(z)
\end{pmatrix}
=\begin{pmatrix}
\Pi(z)^{-1} & 0   \\
0 & \Pi(z)
\end{pmatrix}
\begin{pmatrix}
1 & 0   \\
-\frac{1}{\Pi(z)w_N(z)}e^{\pm iN\pi z} & 1
\end{pmatrix}.
\end{equation}
Define the matrix-valued functions,
\begin{equation} \label{redp6}
\bold R^u_N=\bold P_N(z)\times
\left\{
\begin{aligned}
& \bold D^u_+(z) \quad \textrm{when}\quad \Im z\ge0 \\
& \bold D^u_-(z) \quad \textrm{when}\quad \Im z\le0,
\end{aligned}
\right.
\end{equation}
and 
\begin{equation} \label{redp7}
\bold R^l_N=\bold P_N(z)\times
\left\{
\begin{aligned}
& \bold D^l_+(z),\quad \textrm{when}\quad \Im z\ge0 \\
& \bold D^l_-(z),\quad \textrm{when}\quad \Im z\le0.
\end{aligned}
\right.
\end{equation}
From (\ref{IP3}) we have that
\begin{equation} \label{redp8}
\begin{aligned}
\bold R_N^u(z)&=
\begin{pmatrix}
P_N(z) & -\frac{w_N(z)P_N(z)}{\Pi(z)}e^{\pm iN\pi z}
+C(w_N P_N)(z)   \\
h_{N-1}^{-1}P_{N-1}(z)  
& -\frac{w_N(z)h_{N-1}^{-1}P_{N-1}(z)}{\Pi(z)}e^{\pm iN\pi z}
+h_{N-1}^{-1}C(w_N P_{N-1})(z) 
\end{pmatrix} \\
&\quad\textrm{when}\quad \pm \Im z\ge 0,
\end{aligned}
\end{equation}
and
\begin{equation} \label{redp9}
\begin{aligned}
\bold R_N^l(z)&=
\begin{pmatrix}
\frac{P_N(z)}{\Pi(z)}-\frac{C(w_N P_N)(z)}{w_N(z)}e^{\pm iN\pi z} 
  & \Pi(z)C(w_N P_N)(z)   \\
\frac{h_{N-1}^{-1}P_{N-1}(z)}{\Pi(z)}-\frac{h_{N-1}^{-1}C(w_N P_{N-1})(z)}{w_N(z)}e^{\pm iN\pi z} 
  & \Pi(z)h_{N-1}^{-1}C(w_N P_{N-1})(z)  
\end{pmatrix} \\
&\quad\textrm{when}\quad \pm \Im z\ge 0.
\end{aligned}
\end{equation}
Observe that the functions $\bold R_N^u(z)$, $\bold R_N^l(z)$ are meromorphic on the closed upper and lower complex planes and they are two-valued on the real axis. Their possible poles
are located on the lattice $L_N$. An important result is that, in fact, 
due to some cancellations they do not have any poles at all.
We have the following proposition. 

\begin{prop}
The matrix-valued functions $\bold R_N^u(z)$ and $\bold R_N^l(z)$ have no poles and
on the real line they satisfy the following jump conditions at $x\in\R$:
\begin{equation} \label{redp9a}
\bold R_{N+}^u(x)=\bold R_{N-}^u(x) j_R^u(x),\qquad
j_R^u(x)=
\begin{pmatrix}
1 & -2N\pi iw_N(x)\, \\
0 & 1
\end{pmatrix},
\end{equation}
and
\begin{equation} \label{redp9b}
\bold R_{N+}^l(x)=\bold R_{N-}^l(x) j_R^l(x),\qquad
j_R^l(x)=
\begin{pmatrix}
1 & 0 \\
-\frac{2N\pi i}{w_N(x)} & 1
\end{pmatrix},
\end{equation}
\end{prop}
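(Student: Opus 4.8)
The plan is to verify separately that (i) the matrices $\bold R_N^u(z)$ and $\bold R_N^l(z)$ are holomorphic across the lattice $L_N$, i.e. the apparent simple poles coming from $\bold P_N(z)_{j2}$ and from the $\Pi(z)^{-1}$ factor cancel, and (ii) that the two boundary values from $\Im z>0$ and $\Im z<0$ are related by the stated jump matrices on $\R$. For (i) I would fix a node $x_k=k/N\in L_N$ and expand near it. From the Interpolation Problem, $\bold P_N(z)_{j1}$ is analytic at $x_k$ while $\bold P_N(z)_{j2}$ has the single simple pole with residue $w_N(x_k)\bold P_N(x_k)_{j1}$ prescribed in (\ref{IP1}); simultaneously $\Pi(z)=\frac{\sin N\pi z}{N\pi}$ has a simple zero at $x_k$ with $\Pi'(x_k)=(-1)^k=e^{iN\pi x_k}$, by (\ref{red2}). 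So in the upper-triangular case, using the explicit formula (\ref{redp8}), the $(j,2)$ entry is $-\frac{w_N(z)\bold P_N(z)_{j1}}{\Pi(z)}e^{\pm iN\pi z}+\bold P_N(z)_{j2}$; the first term has a simple pole at $x_k$ with residue $-\frac{w_N(x_k)\bold P_N(x_k)_{j1}}{\Pi'(x_k)}e^{\pm iN\pi x_k}=-w_N(x_k)\bold P_N(x_k)_{j1}$ (since $e^{\pm iN\pi x_k}=(-1)^k=\Pi'(x_k)$), which exactly cancels the residue of $\bold P_N(z)_{j2}$. Thus the $(j,2)$ entries are holomorphic at every node. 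For $\bold R_N^l$ one argues the same way: in (\ref{redp9}) the $(j,1)$ entry is $\frac{\bold P_N(z)_{j1}}{\Pi(z)}-\frac{\bold P_N(z)_{j2}}{w_N(z)}e^{\pm iN\pi z}$, now it is the pole of $\frac{1}{\Pi(z)}$ that must be killed by the pole of $\bold P_N(z)_{j2}$, and the residue computation is identical; the $(j,2)$ entries $\Pi(z)\bold P_N(z)_{j2}$ are holomorphic because the simple zero of $\Pi$ cancels the simple pole of $\bold P_N(z)_{j2}$. Hence neither $\bold R_N^u$ nor $\bold R_N^l$ has poles on $L_N$, and away from $L_N$ they inherit analyticity from $\bold P_N$ (in the closed upper/lower half-planes respectively).

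**The jump on $\R$.** For the jump relation I would note that $\bold P_N(z)$ is itself continuous across $\R\setminus L_N$ (it has no jump there), so for real $x\notin L_N$ one has $\bold R_{N+}^u(x)=\bold P_N(x)\bold D_+^u(x)$ and $\bold R_{N-}^u(x)=\bold P_N(x)\bold D_-^u(x)$, whence $j_R^u(x)=\bold R_{N-}^u(x)^{-1}\bold R_{N+}^u(x)=\bold D_-^u(x)^{-1}\bold D_+^u(x)$. A direct multiplication, using that $\bold D_\pm^u$ differ only in the $(1,2)$ entry $-\frac{w_N(x)}{\Pi(x)}e^{\pm iN\pi x}$, gives the off-diagonal entry $\frac{w_N(x)}{\Pi(x)}\big(e^{-iN\pi x}-e^{iN\pi x}\big)=\frac{w_N(x)}{\Pi(x)}(-2i\sin N\pi x)=-2N\pi i\,w_N(x)$ after substituting $\Pi(x)=\frac{\sin N\pi x}{N\pi}$; this is exactly $j_R^u(x)$ in (\ref{redp9a}). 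The lower-triangular case (\ref{redp9b}) is identical with $j_R^l(x)=\bold D_-^l(x)^{-1}\bold D_+^l(x)$: the diagonal $\Pi(x)^{\mp1}$ factors cancel, and the $(2,1)$ entry works out to $-\frac{1}{w_N(x)}\big(e^{iN\pi x}-e^{-iN\pi x}\big)\cdot\frac{1}{\Pi(x)}=-\frac{2N\pi i}{w_N(x)}$. Finally, since the jump matrices $j_R^u$, $j_R^l$ are entire in $x$ and the (now pole-free) functions $\bold R_N^{u}$, $\bold R_N^{l}$ are continuous up to $\R$ from each side including at the nodes, the jump relation, proved on $\R\setminus L_N$, extends by continuity to all of $\R$.

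**Anticipated difficulty.** The only genuinely delicate point is the residue cancellation at the nodes in step (i): one must be careful that the factor $e^{\pm iN\pi z}$ (rather than $e^{\mp iN\pi z}$) is the correct one in each half-plane so that $e^{\pm iN\pi x_k}$ equals $\Pi'(x_k)=(-1)^k$ and not its reciprocal — here it matters only that $e^{iN\pi x_k}=e^{-iN\pi x_k}=(-1)^k$ at lattice points, so both signs work, but away from $L_N$ the sign choice is exactly what makes $\frac{w_N(z)}{\Pi(z)}e^{\pm iN\pi z}$ bounded (indeed exponentially small) as $\Im z\to\pm\infty$, which is implicitly what keeps the construction consistent with the asymptotic normalization. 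A secondary bookkeeping point is confirming that the $\Pi(z)^{-1}$ and $\Pi(z)$ diagonal factors introduced in $\bold D_\pm^l$ do not create new singularities or spoil the behavior at $\infty$; this is immediate since $\Pi$ is entire with zeros only on $L_N$, all of which are cancelled as shown. Neither of these is a real obstacle — the proposition is essentially an exercise in tracking the simple zero of $\sin N\pi z$ against the simple poles of $\bold P_N$ — so the proof is short once the residue computation is laid out cleanly.
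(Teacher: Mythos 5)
Your proposal is correct and follows essentially the same route as the paper: residue cancellation at each node $x_k$ using $\Pi'(x_k)=(-1)^k=e^{\pm iN\pi x_k}$ together with the prescribed residue (\ref{IP1}), followed by the direct computation $j_R^{u,l}=(\bold D_-^{u,l})^{-1}\bold D_+^{u,l}$ and the identity $2i\sin(N\pi x)=2N\pi i\,\Pi(x)$. The only cosmetic difference is that you state the argument in the generic $(j,1)/(j,2)$ entry notation and append a short continuity argument to extend the jump relation across the nodes, whereas the paper carries out the residue computation entrywise and leaves the extension across $L_N$ implicit.
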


\begin{proof} It follows from the definition of $\bold R_N^u(z)$ 
that all possible poles of $\bold R_N^u(z)$ are located on the lattice $L_N$.
Let us show that the residue of all these poles is equal to zero.
Consider any $x_k\in L_N$. The residue of the matrix element $\bold R_{N,12}^u(z)$ at $x_k$
is  equal to
\begin{equation} \label{redp10}
\underset{z=x_k}{\rm Res}\;\bold R_{N,12}^u(z)=
-\frac{w_N(x_k)P_N(x_k)}{(-1)^k}(-1)^k+w_N(x_k)P_N(x_k)=0.
\end{equation}
Similarly we get that
\begin{equation} \label{redp11}
\underset{z=x_k}{\rm Res}\;\bold R_{N,22}(z)=0,
\end{equation}
hence $\bold R_N^u(z)$ has no pole at $x_k$.

Similarly, the residue of the matrix element $\bold R_{N,11}^l(z)$ at $x_k$ is  equal to
\begin{equation} \label{redp12}
\underset{z=x_k}{\rm Res}\;\bold R_{N,11}^l(z)=\frac{P_N(x_k)}{(-1)^k}
-\frac{w_N(x_k)P_N(x_k)(-1)^k}{w_N(x_k)}=0.
\end{equation}
In the same way we obtain that
\begin{equation} \label{redp13}
\underset{z=x_k}{\rm Res}\;\bold R_{N,21}(z)=0.
\end{equation}
In the entry $\bold R_{N,21}^l(z)$, the pole of the function $C(w_N P_N)(z)$ at $z=x_k$ 
is cancelled by the zero of the function $\Pi(z)$,
hence $\bold R_{N,21}^l(z)$ has no pole at $x_k$. Similarly, $\bold R_{N,22}^l(z)$ has no pole at $x_k$
as well, hence $\bold R_N^l(z)$ has no pole at $x_k$. 

Let us evaluate the jump matrices for $x\in\R$. From (\ref{redp6}) we have that
\begin{equation} \label{redp14}
j_R^u(x)=\bold D^u_-(x)^{-1}\bold D^u_+(x)=
\begin{pmatrix}
1 & -\frac{w_N(x)}{\Pi(x)} 2i\sin {N\pi x}   \\
0 & 1
\end{pmatrix}
=\begin{pmatrix}
1 & -2N\pi iw_N(x)\, \\
0 & 1
\end{pmatrix},
\end{equation}
which proves (\ref{redp9a}). Similarly,
\begin{equation} \label{redp15}
j_R^l(x)=\bold D^l_-(x)^{-1}\bold D^l_+(x)=
\begin{pmatrix}
1 &  0  \\
-\frac{1}{\Pi(x)w_N(x)} 2i\sin {N\pi x} & 1
\end{pmatrix}
=\begin{pmatrix}
1 & 0 \\
-\frac{2N\pi i}{w_N(x)} & 1
\end{pmatrix},
\end{equation}
which proves (\ref{redp9b}).\end{proof}

To reduce the Interpolation Problem to a Riemann-Hilbert Problem, we follow the work \cite{BKMM} with some modifications.  Consider the oriented contour $\Sigma$ on the complex plane depicted in Fig. \ref{fig2}, in which the horizontal lines are $\Im z=\ep, 0,-\ep$, where $\ep >0$ is a small positive constant 
which will be determined later, and the vertical segments pass through
the endpoints of saturated intervals. Consider the regions 

\begin{figure}
\vskip 1cm
\scalebox{0.8}{\includegraphics{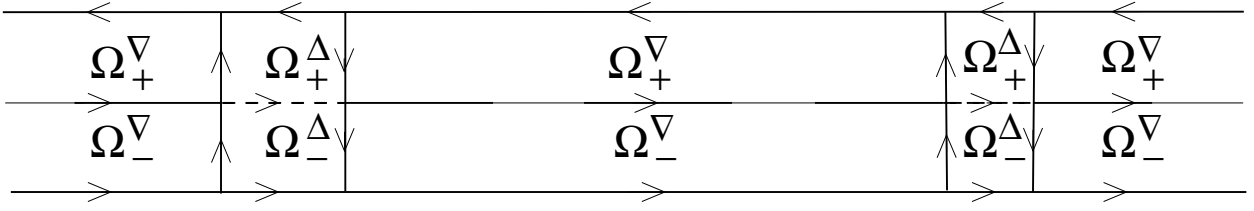}}
\caption{The contour $\Sigma$ arising from the hypothetical equilibrium measure in Figure \ref{fig1}, dividing an $\ep$-neighborhood of the real line into the regions $\Om_\pm^\Delta$ and $\Om_\pm^\nabla$. }
\label{fig2}
\end{figure}

\begin{equation}\label{red21}
\begin{aligned}
\Om^\nabla_{\pm}&=\{I^0 \cup I^-\} \times (0,\pm i \ep) \\
\Om^\De_{\pm}&=I^+ \times (0,\pm i \ep)
\end{aligned}
\end{equation}
bounded by the contour $\Sigma$.  
Define 
\begin{equation} \label{red22}
\bold R_N(z)=
\left\{
\begin{aligned}
&\bold K_N\bold R_N^u(z)\bold K_N^{-1},\quad \textrm{for}\quad z\in\Om_{\pm}^\nabla,\\
&\bold K_N\bold R_N^l(z)\bold K_N^{-1},\quad \textrm{for}\quad z\in\Om_{\pm}^\De,\\
&\bold K_N \bold P_N(z)\bold K_N^{-1}, \quad \textrm{otherwise}.
\end{aligned}
\right.
\end{equation}
where $\bold K_N=\begin{pmatrix} 1 & 0 \\ 0 & -2iN\pi \end{pmatrix}$.

If $A\subset \C$ is a set on the complex plane and $b\in\C$
then, as usual, we denote
\begin{equation} \label{red3a}
A+b=\{z=a+b,\; a\in A\}.
\end{equation}

\begin{prop}
The matrix-valued function $\bold R_N(z)$ has the following jumps on the contour $\Sigma$:
\begin{equation} \label{red23}
\bold R_{N+}(z)=\bold R_{N-}(z)j_R(z),
\end{equation}
where
\begin{equation} \label{red24}
j_R(z)=
\left\{
\begin{aligned}
& \begin{pmatrix} 1 & w_N(z) \\ 0 & 1 \end{pmatrix} \quad \textrm{when}\quad z\in I^- \cup I^0 \\
& \begin{pmatrix} 1 & 0 \\ -(2N\pi)^2w_N(z)^{-1} & 1 \end{pmatrix} \quad \textrm{when}\quad z\in  I^+,\\
& \bold K_N\bold D_{\pm}^u(z)\bold K_N^{-1}=\begin{pmatrix} 1 & \frac{1}{2iN\pi}
\frac{w_N(z)e^{\pm iN \pi z}}{\Pi(z)} \\ 0 & 1 \end{pmatrix} \quad \textrm{when}\quad z\in  \{I^- \cup I^0\} \pm i\ep,\\
&\bold K_N\bold D_{\pm}^l(z)\bold K_N^{-1}=
\begin{pmatrix}
\Pi(z)^{-1} & 0 \\ 
2iN\pi \frac{e^{\pm iN\pi z}}{w_N(z)} & \Pi(z) 
\end{pmatrix}
\quad \textrm{when }\quad z\in I^+ \pm i\ep\\
& \bold K_N\bold D_{\pm}^l(z)^{-1}\bold D_{\pm}^u(z)\bold K_N^{-1}=
\begin{pmatrix} 
\Pi(z) & \frac{1}{2N\pi i}w_N(z)e^{\pm iN\pi z} \\ 
-2N\pi iw_N(z)^{-1}e^{\pm iN\pi z} 
& \mp 2N\pi i e^{\pm iN\pi z} 
\end{pmatrix} \\
&\hspace{2.5 cm} \textrm{when}\quad z\in  (0,\pm i\ep)+\be_j \quad
\textrm{or}\quad z\in  (0,\pm i\ep)+ \al_{j+1} \quad \textrm{for} \quad j\in \acal_s.
\end{aligned}
\right.
\end{equation}
\end{prop}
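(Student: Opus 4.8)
The statement is a bookkeeping verification, and the plan is to treat the six cases of (\ref{red24}) one at a time: for each arc of $\Sigma$ I would read off from (\ref{red22}) which of the three matrices $\bold P_N$, $\bold R_N^u$, $\bold R_N^l$ governs $\bold R_N$ on each of its two sides, and then push that through the conjugation by $\bold K_N$. Two ingredients do all the work. The first is the pair of factorizations $\bold R_N^u=\bold P_N\bold D^u_\pm$ and $\bold R_N^l=\bold P_N\bold D^l_\pm$ for $\pm\Im z\ge 0$, which are just the definitions (\ref{redp6})--(\ref{redp7}). The second is the preceding Proposition, which states that $\bold R_N^u$ and $\bold R_N^l$ have no poles and cross $\R$ with the jumps $j_R^u$ of (\ref{redp9a}) and $j_R^l$ of (\ref{redp9b}). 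It is also worth recording once that conjugation by $\bold K_N=\diag(1,-2iN\pi)$ fixes the diagonal of a $2\times2$ matrix, multiplies its $(1,2)$ entry by $(-2iN\pi)^{-1}$, and multiplies its $(2,1)$ entry by $-2iN\pi$.

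First I would dispose of the pole-free claim: the region of (\ref{red22}) in which $\bold R_N=\bold K_N\bold P_N\bold K_N^{-1}$ is contained in $\{|\Im z|>\ep\}$, which meets no node of $L_N\subset\R$; and in $\Om^\nabla_\pm$ and $\Om^\De_\pm$ the matrix $\bold R_N$ is a $\bold K_N$-conjugate of $\bold R_N^u$ or $\bold R_N^l$, which are pole free by the preceding Proposition. Hence $\bold R_N$ is analytic on $\C\setminus\Sigma$. For the jumps I would go arc by arc. On $I^0\cup I^-$ the boundary values of $\bold R_N$ from above and below both come from $\Om^\nabla_\pm$, so by (\ref{redp9a}) the jump is $\bold K_N j_R^u\bold K_N^{-1}$, and the conjugation rule carries the $(1,2)$ entry $-2N\pi i\,w_N$ to $w_N$; on $I^+$ the same argument with $\bold R_N^l$ and (\ref{redp9b}) carries the $(2,1)$ entry $-\frac{2N\pi i}{w_N}$ to $-(2N\pi)^2 w_N^{-1}$. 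On $\{I^0\cup I^-\}\pm i\ep$ one side lies in the region where $\bold R_N=\bold K_N\bold P_N\bold K_N^{-1}$ and the other in $\Om^\nabla_\pm$, where $\bold R_N=\bold K_N\bold P_N\bold D^u_\pm\bold K_N^{-1}$, so the jump is exactly $\bold K_N\bold D^u_\pm\bold K_N^{-1}$, and writing out this matrix gives the third line; on $I^+\pm i\ep$ the identical argument with $\bold D^l_\pm$ in place of $\bold D^u_\pm$ gives the fourth. On the vertical segments through $\be_j$ and $\al_{j+1}$ with $j\in\acal_s$ one side lies in $\Om^\De_\pm$ and the other in $\Om^\nabla_\pm$, so the jump is $\bold K_N(\bold D^l_\pm)^{-1}\bold D^u_\pm\bold K_N^{-1}$; multiplying out $(\bold D^l_\pm)^{-1}\bold D^u_\pm$, conjugating, and using $\Pi(z)=\frac{\sin N\pi z}{N\pi}$ to collapse the resulting $(2,2)$ entry $\frac{1-e^{\pm 2iN\pi z}}{\Pi(z)}$ to $\mp 2N\pi i\,e^{\pm iN\pi z}$ produces the last line.

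I expect the only genuine subtlety to be organizational: one must use the orientation of $\Sigma$ fixed in Figure~\ref{fig2} consistently, so that each jump comes out as written and not as its inverse, and one must keep the $\pm$'s (which track the sign of $\Im z$ through $\bold D^u_\pm$ and $\bold D^l_\pm$) straight on the horizontal and vertical arcs. The remaining computations, namely the $\bold K_N$-conjugation arithmetic and the elementary identity $1-e^{\pm 2iN\pi z}=\mp\,2iN\pi\,e^{\pm iN\pi z}\,\Pi(z)$ used on the vertical segments, are routine.
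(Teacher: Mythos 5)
Your proposal is correct and is essentially the derivation the paper leaves implicit: the paper states this Proposition without proof, leaving the reader to unwind it from (\ref{red22}), (\ref{redp4})--(\ref{redp7}), and the preceding Proposition exactly as you do. The case-by-case bookkeeping, the $\bold K_N$-conjugation rule, and the identity $1-e^{\pm 2iN\pi z}=\mp 2iN\pi\,e^{\pm iN\pi z}\Pi(z)$ that collapses the $(2,2)$ entry on the vertical segments are all accurate, and your caution about reading orientations off Figure~\ref{fig2} is the right note of care.
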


\section{First transformation of the RHP}\label{ft}

Define the matrix function $\bold T_N(z)$ as follows from the equation,
\begin{equation}\label{ft1}
\bold R_N(z)=e^{\frac{Nl}{2}\sigma_3}\bold T_N(z)e^{N(g(z)-\frac{l}{2})\sigma_3},
\end{equation}
where $l$ is the Lagrange multiplier, the function $g(z)$ is described 
in Section \ref{equilibrium}, and $\sigma_3=\begin{pmatrix} 1 & 0 \\ 0 &-1 \end{pmatrix}$ is the third Pauli matrix.  
Then $\bold T_N(z)$ satisfies the following Riemann-Hilbert Problem:
\begin{enumerate}
  \item 
  $\bold T_N(z)$ is analytic in $\C \setminus \Sg$. 
  \item $\bold T_{N+}(z)=\bold T_{N-}(z)j_T(z)$ for $z\in\Sg$, where
  \begin{equation}\label{ft2}
   j_T(z)=
  \left\{
  \begin{aligned}
 & e^{N(g_-(z)-\frac{l}{2})\sigma_3}j_R(z)e^{-N(g_+(z)-\frac{l}{2})\sigma_3} \quad \textrm{for} \quad z \in \R \\
  &e^{N(g(z)-\frac{l}{2})\sigma_3}j_R(z)e^{-N(g(z)-\frac{l}{2})\sigma_3} \quad \textrm{for} \quad z \in \Sg \setminus \R.
  \end{aligned}\right.
  \end{equation}
  \item As $z \to \infty,$
   \begin{equation} \label{ft3}
\bold T_N(z)\sim  \I+\frac {\bold T_1}{z}+\frac {\bold T_2}{z^2}+\ldots.
\end{equation}
\end{enumerate}
Let's take a closer look at the behavior of the jump matrix $j_T$ described in (\ref{ft2}) on the horizontal segments of $\Sg$.  We have that 
\begin{equation} \label{ft4}
j_T(z)=
\left\{
\begin{aligned}
&\begin{pmatrix} e^{-NG(z)} & w_N(z)e^{N(g_+(z)+g_-(z)-l)} \\ 0 & e^{NG(z)} \end{pmatrix}
\quad \textrm{when}\quad z\in I^0 \cup I^- \\
&\begin{pmatrix} e^{-NG(z)} & 0 \\ -(2N\pi)^2e^{-N(g_+(z)+g_-(z)-V(z)-l)} & e^{NG(z)} \end{pmatrix}\quad \textrm{when}\quad z\in  I^+ \\
&\begin{pmatrix} 1 & \pm\frac{e^{N(2g(z)-l-V(z))}}{1-e^{\mp 2iN\pi x}e^{\ep 2N\pi}} \\ 0 & 1 \end{pmatrix}
\quad \textrm{when }\quad z=x\pm i\ep \in \{ I^-\pm i\ep\} \\
&\begin{pmatrix} 1 & \pm\frac{e^{\pm NG(z)}}{1-e^{\mp 2iN\pi x}e^{\ep 2N\pi}} \\ 0 & 1 \end{pmatrix}\quad \textrm{when }\quad z=x\pm i\ep \in \{ I^0 \pm i\ep \} \\
&\begin{pmatrix} \Pi(z)^{-1} & 0 \\ 2iN\pi e^{\pm iN\pi x}e^{-N(2g(z)-l-V(z))} & \Pi(z)\end{pmatrix}\quad \textrm{when}\quad z=x \pm i \ep \in  \{I^+ \pm i\ep\}.
\end{aligned}\right.
\end{equation}

\section{Second transformation of the RHP}

The second transformation is based on two observations.  The first is the well known 
``opening of the lenses'' in a neighborhood of the unconstrained support of the equilibrium measure.  
Namely, notice that, for $x\in I^0$, the jump matrix $j_T(x)$ factorizes as 
\begin{equation}\label{st1}
\begin{aligned}
j_T(x)=\begin{pmatrix} e^{-NG(x)} & 1 \\ 0 & e^{NG(x)} \end{pmatrix}&=\begin{pmatrix} 1 & 0 \\ e^{NG(x)} & 1 \end{pmatrix}
\begin{pmatrix} 0 & 1 \\ -1 & 0 \end{pmatrix} \begin{pmatrix} 1 & 0 \\ e^{-NG(x)} & 1 \end{pmatrix} \\
&=j_-(x)j_M j_+(x),
\end{aligned}
\end{equation}
which allows us to reduce the jump matrix $j_T$ to the one $j_M$ plus asymptotically small 
jumps on the lens boundaries.
The second observation consists of two facts.  Firstly, the jumps on the segments 
$ I^+ \pm i\ep$ behave, for large $N$, as $\pm e^{\pm iN\pi z}$.  
Secondly, note that, for $x\in I^+$, $G(x)$ is a linear function with slope $-2\pi i$.  
With these facts in mind, we make the second transformation of the RHP.  Let
\begin{equation}\label{st2}
\begin{aligned}
&\bold S_N(z)=\left\{
\begin{aligned}
&\bold T_N(z)j_+(z)^{-1} \quad \textrm{for} \quad z\in I^0 \times (0,i\ep) \\
&\bold T_N(z)j_-(z)\quad \textrm{for} \quad z\in I^0 \times (0,-i\ep) \\
&\bold T_N(z)\bold A_+(z)\quad \textrm{for} \quad z\in I^+ \times (0,i\ep) \\
&\bold T_N(z)\bold A_-(z)\quad \textrm{for} \quad z\in I^+ \times (0,-i\ep) \\
&\bold T_N(z)  \quad \textrm{otherwise}.
\end{aligned}\right. \\ 
\textrm{where} \ &\bold A_+(z) =\begin{pmatrix} -\frac{1}{2N\pi i}e^{- iN\pi z} & 0 \\ 0 & -2N\pi i e^{iN\pi z} \end{pmatrix} \ \textrm{and} \ \bold A_-(z) =\begin{pmatrix} \frac{1}{2N\pi i}e^{iN\pi z} & 0 \\ 0 & 2N\pi i e^{-iN\pi z} \end{pmatrix}.
\end{aligned}
\end{equation}

This function satisfies a similar RHP to $\bold T$, but jumps now occur on a new contour, $\Sigma_S$, 
which is obtained from $\Sg$ by adding the segments $(\al_1-i\ep, \al_1+i\ep)$, $(\be_q-i\ep, \be_q+i\ep)$,$(\al_{j+1}-i\ep, \al_{j+1}+i\ep)$, $(\be_j-i\ep, \be_j+i\ep)$ for $j \in \acal_v$, see Fig. \ref{sigma_S}.

\begin{center}
 \begin{figure}[h]
\begin{center}
   \scalebox{0.8}{\includegraphics{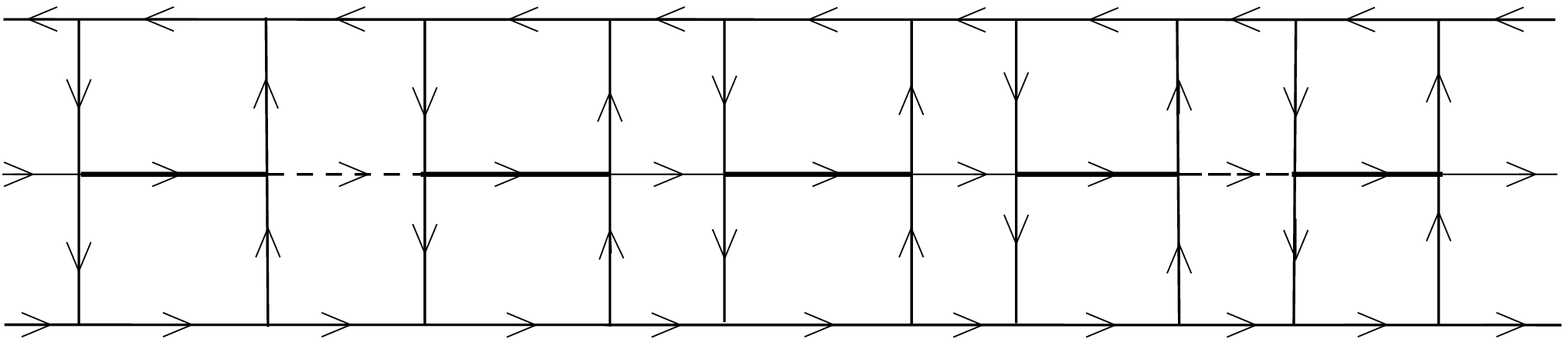}}
\end{center}
        \caption[The contour $\Sigma_S$.]{The contour $\Sigma_S$ arising from the hypothetical equilibrium measure shown in Figure \ref{fig1}.}
\label{sigma_S}
    \end{figure}
\end{center}

On horizontal segments, we have that
\begin{equation}\label{st3}
j_S(z)=\left\{
\begin{aligned}
& \begin{pmatrix} 0 & 1 \\ -1 & 0 \end{pmatrix}
 \quad &\textrm{for} \quad z\in I^0 \\
& \begin{pmatrix} 1+O(e^{-2\ep N\pi}) & O(e^{N(G(z)-2\ep \pi)})  \\ -e^{-NG(z)} & 1\end{pmatrix} \quad &\textrm{for} \quad z-i\ep\in I^0 \\
&\begin{pmatrix} 1+O(e^{-2\ep N\pi}) & O(e^{N(-G(z)-2\ep \pi)}) \\ e^{NG(z)} & 1\end{pmatrix} \quad &\textrm{for} \quad z+i\ep\in I^0 \\
&\begin{pmatrix} 1+O(e^{-2\ep N\pi}) & 0 \\ 2iN\pi e^{-N(2g(z)-l-V(z))} & 1+O(e^{-2\ep N\pi}) \end{pmatrix} \quad &\textrm{for} \quad z \in \{I^+\pm i\ep\} \\
&\begin{pmatrix} -e^{-iN\Omega_j} & 0 \\ -e^{-N(g_+(z)+g_-(z)-l-V(z))} & -e^{iN\Omega_j} \end{pmatrix} \quad &\textrm{for}\quad z \in (\be_j,\al_{j+1}), \quad j \in \mathcal{A}_s \\
& \begin{pmatrix} 1 & {e^{N(2g(z)-l-V(z))}}O(e^{-2\ep N\pi}) \\ 0 & 1 \end{pmatrix}
\quad &\textrm{for}\quad z=x\pm i\ep \in  \{I^-\pm i\ep\} \\
&\begin{pmatrix} e^{-iN\Om_j} & e^{N(g_+(z)+g_-(z)-l-V(z))} \\ 0 & e^{iN\Om_j} \end{pmatrix}
\quad &\textrm{for}\quad z \in (\be_j,\al_{j+1}), \quad j \in \mathcal{A}_v. 
\end{aligned}\right.
\end{equation}
By formula (\ref{g5}) for the $G$-function and the upper constraint on the density $\rho$, we obtain that, for sufficiently small $\ep > 0$ and $x \in (\al_j,\be_j)$, 
\begin{equation}\label{st4}
0< \mp \Re G(x\pm i\ep) = 2\pi \rho(x) +O(\ep^2) < 2\pi \ep +  O(\ep^2).
\end{equation}
This, combined with property (\ref{g7}) of the $g$-function, implies that all jumps on horizontal segments are exponentially close to the identity matrix, provided that they are bounded away from the segment $(\al_1,\be_q)$.

\section{Model RHP}

The model RHP appears when we drop in the jump matrix $j_S(z)$ the terms that vanish as $N\to\infty$:
\begin{enumerate}
  \item 
  $\bold M(z)$ is analytic in $\C \setminus [\al_1,\be_q]$.
  \item $\bold M_{+}(z)=\bold M_{-}(z)j_M(z)$ for $z\in[\al_1,\be_q]$, where
  \begin{equation}\label{m1}
   j_M(z)=
  \left\{
  \begin{aligned}
 & \begin{pmatrix} 0 & 1 \\ -1 & 0 \end{pmatrix} \quad \textrm{for} \quad z \in I^0 \\
&\; e^{-i\Om_{j,N}\sigma_3} \quad \textrm{for} \quad z \in (\be_j,\al_{j+1}),
  \end{aligned}\right.
  \end{equation}
 \item As $z \to \infty$,
\begin{equation} \label{m2}
\bold M(z)\sim \I+\frac {\bold M_1}{z}+\frac {\bold M_2}{z^2}+\ldots.
\end{equation}
\end{enumerate}

This model problem was first solved in the general multi-cut case in \cite{DKMVZ}, and the solution is given as follows, using the notation introduced in section \ref{results}.
\begin{equation}\label{m20}
\bold M(z) = \bold F (\infty)^{-1} \begin{pmatrix} \frac{\ga(z)+\ga^{-1}(z)}{2}\frac{\th(u(z)+\frac{\Om_N}{2\pi}+d)}{\th(u(z)+d)} &  \frac{\ga(z)-\ga^{-1}(z)}{-2i}\frac{\th(u(z)-\frac{\Om_N}{2\pi}-d)}{\th(u(z)-d)} \\ \frac{\ga(z)-\ga^{-1}(z)}{2i}\frac{\th(u(z)+\frac{\Om_N}{2\pi}-d)}{\th(u(z)-d)} & \frac{\ga(z)+\ga^{-1}(z)}{2}\frac{\th(u(z)-\frac{\Om_N}{2\pi}+d)}{\th(u(z)+d)} \end{pmatrix}
\end{equation}
where
\begin{equation}\label{m21}
\bold F(\infty)=\begin{pmatrix} \frac{\th(u(\infty)+\frac{\Om_N}{2\pi}+d)}{\th(u(\infty)+d)} & 0 \\ 0 & \frac{\th(u(\infty)-\frac{\Om_N}{2\pi}+d)}{\th(u(\infty)+d)} \end{pmatrix}.
\end{equation}
 The asymptotics at infinity are given as
\begin{equation}\label{m28}
\bold M(z)=\bold I + \frac{\bold M_1}{z} + O(z^{-2}).
\end{equation}

\section{Parametrix at band-void edge points}

We now consider small disks $D(\al_j,\ep)$ for $j-1 \in \acal_v \cup \{0\}$, and $D(\be_j,\ep)$ for $j \in \acal_v \cup \{q\}$, centered at the endpoints of bands which are adjacent to a void.  Denote 
\begin{equation}\label{pm0a}
D=\left(\underset{j-1 \in \acal_v \cup \{0\}}{\bigcup} D(\al_j,\ep)\right) \bigcup \left(\underset{j \in \acal_v \cup \{q\}}{\bigcup} D(\be_j,\ep)\right).
\end{equation}
We will seek a local parametrix $\bold U_N(z)$ defined on $D$ such that\begin{enumerate}
\item 
\begin{equation}\label{pm0}
\bold U_N(z) \ \textrm{is analytic on} \ D \setminus \Sigma_S.
\end{equation}
\item
\begin{equation}\label{pm1}
\bold U_{N+}(z)=\bold U_{N-}(z)j_S(z) \quad \textrm{for} \quad z\in D \cap \Sigma_S.
\end{equation}
\item
\begin{equation}\label{pm2}
\bold U_N(z)=\bold M(z) \big(I+O(N^{-1})\big) \quad \textrm{uniformly for} \  z\in \partial D.
\end{equation}
\end{enumerate}
We first construct the parametrix near $\be_j$ for $j \in \acal_v$.  The jumps $j_S$ are given by
\begin{equation}\label{pm3}
j_S(z)=
\left\{
\begin{aligned}
&\begin{pmatrix} 0 & 1 \\ -1 & 0 \end{pmatrix} \quad \textrm{for} \ z\in (\be_j-\ep,\be_j) \\
&\begin{pmatrix} 1& 0 \\- e^{-NG(z)} & 1 \end{pmatrix}\quad \textrm{for} \ z\in (\be_j,\be_j+i\ep) \\
&\begin{pmatrix} 1& 0 \\ e^{NG(z)} & 1 \end{pmatrix}\quad \textrm{for} \ z\in (\be_j,\be_j-i\ep) \\
&\begin{pmatrix} e^{-NG(z)} & e^{N(g_+(z)+g_-(z)-V(z)-l)} \\ 0 &  e^{NG(z)} \end{pmatrix} \quad \textrm{for} \ z\in (\be_j,\be_j+\ep).
\end{aligned}\right.
\end{equation}
If we let
\begin{equation}\label{pm4}
\bold U_N(z)=\bold Q_N(z)e^{-N(g(z)-\frac{V(z)}{2}-\frac{l}{2})\sigma_3},
\end{equation}
then the jump conditions on $\bold Q_N$ become
\begin{equation}\label{pm5}
\bold Q_{N+}(z)=\bold Q_{N-}(z)j_Q(z)
\end{equation}
where
\begin{equation}\label{pm6}
j_Q(z)=
\left\{
\begin{aligned}
&\begin{pmatrix} 0 & 1 \\ -1 & 0 \end{pmatrix} \quad \textrm{for} \ z\in (\be_j-\ep,\be_j) \\
&\begin{pmatrix} 1& 0 \\ -1 & 1 \end{pmatrix}\quad \textrm{for} \ z\in (\be_j,\be_j+i\ep) \\
&\begin{pmatrix} 1& 0 \\ 1 & 1 \end{pmatrix}\quad \textrm{for} \ z\in (\be_j,\be_j-i\ep) \\
&\begin{pmatrix} 1 & 1 \\ 0 &  1 \end{pmatrix} \quad \textrm{for} \ z\in (\be_j,\be_j+\ep).
\end{aligned}\right.
\end{equation}
where orientation is from left to right on horizontal contours, and down to up on vertical contours, according to Figure \ref{sigma_S}.

$\bold Q_N$ can be constructed using Airy functions.  The Airy function solves the differential equation $y''=zy$, and has the following asymptotics at infinity (see, e.g. \cite{Ol}):
\begin{equation}\label{pm6a}
\begin{aligned}
&\Ai (z) = \frac{1}{2\sqrt{\pi}z^{1/4}}e^{-\frac{2}{3}z^{3/2}}\left(1-\frac{5}{48}z^{-3/2}+O(z^{-3})\right) \\
&\Ai' (z) = -\frac{1}{2\sqrt{\pi}}z^{1/4}e^{-\frac{2}{3}z^{3/2}}\left(1+\frac{7}{48}z^{-3/2}+O(z^{-3})\right) \\
\end{aligned}
\end{equation}
as $\ z\to \infty$ with $\arg z \in (-\pi+\ep, \pi-\ep)$ for any $\ep > 0$.
If we let
\begin{equation}\label{pm7}
y_0(z)=\Ai (z), \quad y_1(z)=\omega \Ai (\omega z), \quad y_2(z)=\omega^2 \Ai (\omega^2 z)
\end{equation}
where $\omega = e^{\frac{2\pi i}{3}}$, then the functions $y_0, y_1,$ and $y_2$ satisfy the relation
\begin{equation}\label{pm8}
y_0(z)+y_1(z)+y_2(z)=0.
\end{equation}
If we take
\begin{equation}\label{pm9}
\Phi_{rv}(z)=\left\{
\begin{aligned}
&\begin{pmatrix} y_0(z) & -y_2(z) \\ y_0'(z) & -y_2'(z) \end{pmatrix} \quad \textrm{for} \quad  \arg z \in \left(0,\frac{\pi}{2}\right) \\
&\begin{pmatrix} -y_1(z) & -y_2(z) \\ -y_1'(z) & -y_2'(z) \end{pmatrix} \quad \textrm{for} \quad  \arg z \in \left(\frac{\pi}{2},\pi\right) \\
&\begin{pmatrix} -y_2(z) & y_1(z) \\ -y_2'(z) & y_1'(z) \end{pmatrix} \quad \textrm{for} \quad  \arg z \in \left(-\pi,-\frac{\pi}{2}\right) \\
&\begin{pmatrix} y_0(z) & y_1(z) \\ y_0'(z) & y_1'(z) \end{pmatrix} \quad \textrm{for} \quad  \arg z \in \left(-\frac{\pi}{2},0\right),
\end{aligned}
\right.
\end{equation}
then $\Phi_{rv}$ satisfies jump conditions similar to (\ref{pm6}), but for jumps on rays emanating from the origin rather than from $\be_j$.  We thus need to map the disk $D(\be_j, \ep)$ onto some convex neighborhood of the origin in order to take advantage of the function $\Phi_{rv}$.  Our mapping should match the asymptotics of the Airy function in order to have the matching property (\ref{pm2}).

To this end, notice that, by (\ref{eq23}), for $t \in [\al_j,\be_j]$, as $t\to\be_j$,
\begin{equation}\label{pm9a}
\rho(t)=C(\be_j-t)^{1/2}+O\big((\be-t)^{3/2}\big), \quad C>0.
\end{equation}
It follows that, for $\quad x \in [\al_j,\be_j]\quad \textrm{as} \quad x\to\be_j$,
\begin{equation}\label{pm10}
\int_x^{\be_j}\rho(t)dt=C(\be_j-x)^{3/2}+O\big((\be_j-x)^{5/2}\big) \quad C_0=\frac{2}{3}C.
\end{equation}
Thus,
\begin{equation}\label{pm10a}
\psi_{\be_j}(z)=-\left\{\frac{3\pi}{2}\int_z^{\be_j}\rho(t)dt\right\}^{2/3}
\end{equation}
is analytic at $\be_j$, thus extends to a conformal map from $D(\be_j, \ep)$ (for small enough $\ep$) onto a convex neighborhood of the origin.  Furthermore, 
\begin{equation}\label{pm10b}
\psi_{\be_j}(\be_j)=0 \quad ; \quad \psi_{\be_j}'(\be_j)>0,
\end{equation}
thus $\psi_{\be_j}$ is real negative on $(\be_j-\ep, \be_j)$, and real positive on $(\be_j, \be_j+\ep)$.  Also, we can slightly deform the vertical pieces of the contour $\Sigma_S$ close to $\be_j$, so that
\begin{equation}\label{pm10bb}
\psi_{\be_j}\{D(\be_j,\ep) \cap \Sigma_S\}=(-\ep,\ep) \cup (-i\ep,i\ep).
\end{equation}
We now set
\begin{equation}\label{pm10c}
\bold Q_N(z)=\bold E_N^{\be_j}(z)\Phi_{rv}\big(N^{2/3}\psi_{\be_j}(z)\big)
\end{equation}
so that 
\begin{equation}\label{pm11}
\bold U_N(z)=\bold E_N^{\be_j}(z)\Phi_{rv}\big(N^{2/3}\psi_{\be_j}(z)\big)e^{-N(g(z)-\frac{V(z)}{2}-\frac{l}{2})\sigma_3}
\end{equation}
where
\begin{equation}\label{pm12}
\begin{aligned}
\bold E_N^{\be_j}(z)&=\bold M(z)e^{\pm\frac{i\Om_{j,N}}{2}\sg_3} \bold L_N^{\be_j}(z)^{-1} \quad \textrm{for} \quad \pm \Im z >0, \\
\bold L_N^{\be_j}(z)&=\frac{1}{2\sqrt{\pi}}\begin{pmatrix} N^{-1/6} \psi_{\be_j}^{-1/4}(z) & 0 \\ 0 & N^{1/6} \psi_{\be_j}^{1/4}(z)\end{pmatrix}\begin{pmatrix} 1 & i \\ -1 & i\end{pmatrix},  
\end{aligned}
\end{equation}
and we take the principal branch of $\psi_{\be_j}^{1/4}$, which is positive on $(\be_j, \be_j+\ep)$ and has a cut on  $(\be_j-\ep,\be_j)$. The function $\Phi_{rv}(N^{2/3}\psi_{\be_j}(z))$ has the jumps $j_S$, and we claim that the prefactor $\bold E_N^{\be_j}$ is analytic in $D(\be_j, \ep)$, thus does not change these jumps.  This can be seen, as 
\begin{equation}\label{pmi14}
\bold M_+(z)e^{\frac{i\Omega_{j,N}}{2}\sigma_3}=\bold M_-(z)e^{-\frac{i\Omega_{j,N}}{2}\sigma_3}e^{\frac{i\Omega_{j,N}}{2}\sigma_3}j_Me^{\frac{i\Omega_{j,N}}{2}\sigma_3}
\end{equation}
thus the jump for the function $\bold M(z)e^{\pm\frac{i\Omega_{j,N}}{2}\sigma_3}$ is
\begin{equation}\label{pmi15}
e^{\frac{i\Omega_{j,N}}{2}\sigma_3}j_Me^{\frac{i\Omega_{j,N}}{2}\sigma_3}=\left\{
\begin{aligned}
& e^{\frac{i\Omega_{j,N}}{2}\sigma_3}\begin{pmatrix} 0 & 1 \\ -1 & 0 \end{pmatrix} e^{\frac{i\Omega_{j,N}}{2}\sigma_3}\quad \textrm{for} \quad z \in (\be_j-\ep,\be_j) \\
 &e^{\frac{i\Omega_{j,N}}{2}\sigma_3} e^{-i\Omega_{j,N}\sigma_3} e^{\frac{i\Omega_{j,N}}{2}\sigma_3}\quad \textrm{for} \quad z \in (\be_j,\be_j+\ep)
\end{aligned}\right.
\end{equation}
or equivalently, 
\begin{equation}\label{pmi16}
e^{\frac{i\Omega_{j,N}}{2}\sigma_3}j_Me^{\frac{i\Omega_{j,N}}{2}\sigma_3}=\left\{
\begin{aligned}
&\begin{pmatrix} 0 & 1 \\ -1 & 0 \end{pmatrix} \quad \textrm{for} \quad z \in (\be_j-\ep,\be_j) \\
 &\begin{pmatrix} 1 & 0 \\ 0 &1 \end{pmatrix}\quad \textrm{for} \quad z \in (\be_j,\be_j+\ep),
\end{aligned}\right.
\end{equation}
which is exactly the same as the jump conditions for $\bold L^{\be_j}_N$.  Thus $\bold E^{\be_j}_N(z)=\bold M(z)e^{\pm\frac{i\Omega_{j,N}}{2}\sigma_3}\bold L^{\be_j}_N(z)^{-1}$ has no jumps in $D(\be_j,\ep)$.  The only other possible singularity for $\bold E^{\be_j}_N$ is at $\be_j$, and this singularity is at most a fourth root singularity, thus removable.  Thus, $\bold E^{\be_j}_N$ is analytic in $D(\be_j, \ep)$, and $\bold Q_N$ has the prescribed jumps.
We are left only to prove the matching condition (\ref{pm2}).  Using (\ref{pm6a}), one can check that, for $z$ in each of the sectors of analyticity, $\Phi_{rv}(N^{2/3}\psi_{\be_j}(z))$ satisfies the following asymptotics as $N\to\infty$:
\begin{equation}\label{pm13}
\begin{aligned}
\Phi_{rv}\big(N^{2/3}\psi_{\be_j}(z)\big)&=\frac{1}{2\sqrt{\pi}}N^{-\frac{1}{6}\sigma_3}\psi_{\be_j}(z)^{-\frac{1}{4}\sigma_3}
\left[ \begin{pmatrix} 1& i \\ -1 & i \end{pmatrix} + \frac{\psi_{\be_j}(z)^{-3/2}}{48N}
\begin{pmatrix} -5 & 5i \\ -7 & -7i \end{pmatrix}+O(N^{-2})\right]\\
&\qquad \times e^{-\frac{2}{3}N\psi_{\be_j}(z)^{3/2}\sigma_3}
\end{aligned}
\end{equation}
where we always take the principal branch of $\psi_{\be_j}(z)^{3/2}$.  As such, $\psi_{\be_j}(z)^{3/2}$ is two-valued for $z\in (\be_j-\ep, \be_j)$, so that 
\begin{equation}\label{pm14}
\left[\frac{2}{3}\psi_{\be_j}(x)^{3/2}\right]_{\pm}=\mp \pi i\int_x^{\be_j} \rho(t)dt.
\end{equation}
Notice that, by (\ref{g9}),  
\begin{equation}\label{pm15}
2g_{\pm}(x)-V(x)=l\pm 2\pi i\int_x^{\be_q}\rho(t)dt=l\pm 2\pi i\int_x^{\be_j}\rho(t)dt \pm i\Om_j
\end{equation}
This implies that for $x \in (\be_j-\ep, \be_j)$,
\begin{equation}\label{pm16}
\begin{aligned}
&[2g_+(\be_j)-V(\be_j)]-[2g_+(x)-V(x)]=-2\pi i \int_x^{\be_j} \rho(t)dt\,, \\
&[2g_-(\be_j)-V(\be_j)]-[2g_-(x)-V(x)]=2\pi i \int_x^{\be_j} \rho(t)dt\,. 
\end{aligned}
\end{equation}
Combining these equations with (\ref{pm14}) gives
\begin{equation}\label{pm17}
\left[\frac{2}{3}\psi_{\be_j}(x)^{3/2}\right]_{\pm}=\frac{1}{2}\bigg[\big(2g_{\pm}(\be_j)-V(\be_j)\big)-\big(2g_{\pm}(x)-V(x)\big)\bigg].
\end{equation}
This equation can be extended into the upper and lower planes, respectively, giving
\begin{equation}\label{pm18}
\frac{2}{3}\psi_{\be_j}(z)^{3/2}=\frac{1}{2}\bigg[\big(2g_{\pm}(\be_j)-V(\be_j)\big)-\big(2g(z)-V(z)\big)\bigg] \quad \textrm{for} \ \pm \Im z >0.
\end{equation}
Since, by (\ref{pm15}), $2g_{\pm}(\be_j)-V(\be_j)=l\pm i\Om_j$, we get that
\begin{equation}\label{pm19}
\frac{2}{3}\psi_{\be_j}(z)^{3/2}=-g(z)+\frac{V(z)}{2}+\frac{l}{2}\pm \frac{i\Om_j}{2}
\end{equation}
for $\pm \Im z >0$.
Plugging (\ref{pm13}) and (\ref{pm19}) into (\ref{pm11}), we get
\begin{equation}\label{pm20}
\begin{aligned}
\bold U_N(z)&=\bold M(z)e^{\pm \frac{i\Om_{j,N}}{2}}\bold L_N^{\be_j}(z)^{-1}\frac{1}{2\sqrt{\pi}}N^{-\frac{1}{6}\sigma_3}\psi_{\be_j}(z)^{-\frac{1}{4}\sigma_3}
\bigg[ \begin{pmatrix} 1& i \\ -1 & i \end{pmatrix} + \frac{\psi_{\be_j}(z)^{-3/2}}{48N}
\begin{pmatrix} -5 & 5i \\ -7 & -7i \end{pmatrix}\\
&\quad +O(N^{-2})\bigg] \ e^{N(g(z)-\frac{V(z)}{2}-\frac{l}{2} \mp \frac{i\Om_j}{2})\sigma_3}e^{-N(g(z)-\frac{V(z)}{2}-\frac{l}{2})\sigma_3} \\
&=\bold M(z)\left[ \bold I + \frac{\psi_{\be_j}(z)^{-3/2}}{48N}\begin{pmatrix} 1 & 6ie^{\pm i\Om_{j,N}} \\  6ie^{\mp i\Om_{j,N}} & -1 \end{pmatrix}+O(N^{-2})\right] 
\end{aligned}
\end{equation}
for $\pm \Im z > 0$.
Thus we have that $\bold U_N$ satisfies conditions (\ref{pm0}), (\ref{pm1}), and (\ref{pm2}).

A similar construction gives the parametrix at the $\al_j$ for $j-1 \in \acal_v$.  Namely, if we let
\begin{equation}\label{pm21}
\psi_{\al_j}(z)=-\left\{\frac{3\pi}{2}\int_{\al_j}^z \rho(t)dt\right\}^{2/3},
\end{equation}
then $\psi_{\al_j}$ is analytic throughout $D(\al_j,\ep)$, real valued on the real line, and has negative derivative at $\al_j$.  Close to $\al_j$, the jumps $j_Q$ become
\begin{equation}\label{pm21a}
j_Q(z)=
\left\{
\begin{aligned}
&\begin{pmatrix} 1 & 1 \\ 0 & 1 \end{pmatrix} \quad \textrm{for} \ z\in (\al_j-\ep,\al_j) \\
&\begin{pmatrix} 1& 0 \\ -1 & 1 \end{pmatrix}\quad \textrm{for} \ z\in (\al_j,\al_j+i\ep) \\
&\begin{pmatrix} 1& 0 \\ 1 & 1 \end{pmatrix}\quad \textrm{for} \ z\in (\al_j,\al_j-i\ep) \\
&\begin{pmatrix} 0 & 1 \\ -1 & 0 \end{pmatrix} \quad \textrm{for} \ z\in (\al_j,\al_j+\ep),
\end{aligned}\right.
\end{equation}
where orientation is taken left to right on horizontal contours, and up to down on vertical contours according to Figure \ref{sigma_S}.  After the change of variables $\psi_{\al_j}$ (and a slight deformation of vertical contours), these jumps become the following jumps close to the origin:
\begin{equation}\label{pm21b}
j_Q\big(\psi_{\al_j}(z)\big)=\left\{
\begin{aligned}
&\begin{pmatrix} 0 & 1 \\ -1 & 0 \end{pmatrix} \quad \textrm{for} \ \psi_{\al_j}(z)\in (-\ep,0) \\
&\begin{pmatrix} 1 & 0 \\ 1 & 1 \end{pmatrix}\quad \textrm{for} \ \psi_{\al_j}(z)\in (0,i\ep) \\
&\begin{pmatrix} 1 & 0 \\ -1 & 1 \end{pmatrix}\quad \textrm{for} \ \psi_{\al_j}(z)\in (0,-i\ep) \\
&\begin{pmatrix} 1 & 1 \\ 0 & 1 \end{pmatrix} \quad \textrm{for} \ \psi_{\al_j}(z)\in (0,\ep),
\end{aligned}\right.
\end{equation}
where orientation is taken right to left on horizontal contours, and down to up on vertical contours.
These jump conditions are satisfied by the function
\begin{equation}\label{pm21c}
\Phi_{lv}(z)=\Phi_{rv}(z)\begin{pmatrix} 1 & 0 \\ 0 & -1 \end{pmatrix}.
\end{equation}
Then we can take 
\begin{equation}\label{pm22}
\bold U_N(z)=\bold E_N^{\al_j}(z)\Phi_{lv}\big(N^{2/3}\psi_{\al_j}(z)\big)e^{-N(g(z)-\frac{V(z)}{2}-\frac{l}{2})\sigma_3}
\end{equation}
for $z\in D(\al_j,\ep)$, where
\begin{equation}\label{pm23}
\begin{aligned}
\bold E_N^{\al_j}(z)&=\bold M(z)e^{\pm \frac{i\Om_{j-1,N}}{2}\sg_3}\bold L_N^{\al_j}(z)^{-1} \quad \textrm{for} \quad \pm \Im z >0, \\
\bold L_N^{\al_j}(z)&=\frac{1}{2\sqrt{\pi}}
\begin{pmatrix} N^{-1/6} \psi_{\al_j}^{-1/4}(z) & 0 \\ 0 & N^{1/6} \psi_{\al_j}^{1/4}(z)\end{pmatrix}
\begin{pmatrix} 1 & -i \\ -1 & -i \end{pmatrix} \\
\end{aligned}
\end{equation}
is an analytic prefactor.
Similar to (\ref{pm13}), we have that in each sector of analyticity, $\Phi_{lv}(N^{2/3}\psi_{\al_j}(z))$ satisfies
\begin{equation}\label{pm24}
\begin{aligned}
\Phi_{lv}\big(N^{2/3}\psi_{\al_j}(z)\big)&=\frac{1}{2\sqrt{\pi}}
N^{-\frac{1}{6}\sigma_3}\psi_{\al_j}(z)^{-\frac{1}{4}\sigma_3}
\bigg[ \begin{pmatrix} 1& -i \\ -1 & -i \end{pmatrix} + \frac{\psi_{\al_j}(z)^{-3/2}}{48N}
\begin{pmatrix} -5 & -5i \\ -7 & 7i \end{pmatrix}+O(N^{-2})\bigg]\\
&\qquad \times e^{-\frac{2}{3}N\psi_{\al_j}(z)^{3/2}\sigma_3}.
\end{aligned}
\end{equation}
Once again, we have that, for $x\in (\al_j,\al_j+\ep)$, $\psi_{\al_j}(x)^{3/2}$ takes limiting values from above and below, so that
\begin{equation}\label{pm25}
\left[\frac{2}{3}\psi_{\al_j}(x)^{3/2}\right]_{\pm}=\pm \pi i\int_{\al_j}^x \rho(t)dt.
\end{equation}
In analogue to (\ref{pm17}), we have
\begin{equation}\label{pm25a}
\frac{2}{3}\psi_{\al_j}(z)^{3/2}=\frac{1}{2}\bigg[\big(2g_{\pm}(\al_j)-V(\al_j)\big)-\big(2g(z)-V(z)\big)\bigg]
 \quad \textrm{for} \ \pm \Im z >0.
\end{equation}
Since, by (\ref{pm15}), $2g_{\pm}(\al_j)-V(\al_j)=l\pm\pi i$, we get that
\begin{equation}\label{pm26}
\frac{2}{3}\psi_{\al_j}(z)^{3/2}=-g(z)+\frac{V(z)}{2}+\frac{l}{2} \pm\frac{i\Om_{j-1}}{2} \quad \textrm{for} \quad \pm \Im z >0.
\end{equation}
Plugging (\ref{pm26}) into (\ref{pm22}) and (\ref{pm24}) gives, as $N\to \infty$,
\begin{equation}\label{pm27}
\begin{aligned}
\bold U_N(z)&=\bold M(z)\bold L_N^{\al_j}(z)^{-1}\frac{1}{2\sqrt{\pi}}
N^{-\frac{1}{6}\sigma_3}\psi_{\al_j}(z)^{-\frac{1}{4}\sigma_3}
\bigg[ \begin{pmatrix} 1& -i \\ -1 & -i \end{pmatrix} + \frac{\psi_{\al_j}(z)^{-3/2}}{48N}
\begin{pmatrix} -5 & -5i \\ -7 & 7i \end{pmatrix}\\
&\qquad +O(N^{-2})\bigg]  e^{N(g(z)-\frac{V(z)}{2}-\frac{l}{2} \mp \frac{i\Om_{j-1}}{2})\sigma_3}e^{-N(g(z)-\frac{V(z)}{2}-\frac{l}{2})\sigma_3} \\
&=\bold M(z)\left[ \bold I + \frac{\psi_{\al_j}(z)^{-3/2}}{48N}
\begin{pmatrix} 1 & -6ie^{\frac{i\Om_{j-1,N}}{2}\sg_3} \\  -6ie^{-\frac{i\Om_{j-1,N}}{2}\sg_3} & -1 \end{pmatrix}+O(N^{-2})\right] .
\end{aligned}
\end{equation}

\section{Parametrix at the band-saturated region end points}

We now consider small disks $D(\al_j,\ep)$ for $j-1 \in \acal_s$, and $D(\be_j,\ep)$ for $j \in \acal_s$, centered at the endpoints of bands which are adjacent to a saturated region.   Denote 
\begin{equation}\label{pmi0a}
\tilde{D}=\left(\underset{j-1 \in \acal_s}\bigcup D(\al_j,\ep)\right) \bigcup \left(\underset{j \in \acal_s}\bigcup D(\be_j,\ep)\right).
\end{equation}

  We will seek a local parametrix $\bold U_N(z)$ defined on $\tilde{D}$ such that
\begin{enumerate}
\item 
\begin{equation}\label{pmi0}
\bold U_N(z) \ \textrm{is analytic on} \ \tilde{D} \setminus \Sigma_S.
\end{equation}
\item
\begin{equation}\label{pmi1}
\bold U_{N+}(z)=\bold U_{N-}(z)j_S(z) \quad \textrm{for} \quad z\in \tilde{D} \cap \Sigma_S.
\end{equation}
\item
\begin{equation}\label{pmi2}
\bold U_N(z)=\bold M(z) \big(I+O(N^{-1})\big) \quad \textrm{uniformly for} \  z\in \partial \tilde{D}.
\end{equation}
\end{enumerate}
We first construct the parametrix near $\be_j$ for $j \in \acal_s$.  Let 
\begin{equation}\label{pmi3}
\bold U_N(z)=\tilde{\bold Q}_N(z)e^{\mp iN\pi z\sigma_3}e^{-N(g(z)-\frac{V(z)}{2}-\frac{l}{2})\sigma_3} \quad \textrm{for} \quad \pm \Im z > 0.
\end{equation}
Then the jumps for $\tilde{\bold Q}_N$ are
\begin{equation}\label{pmi4}
j_{\tilde{Q}}(z)=\left\{
\begin{aligned}
&\begin{pmatrix} 0 & 1 \\ -1 & 0 \end{pmatrix} \quad &\textrm{for}\quad z \in (\be_j-\ep,\be_j) \\
&\begin{pmatrix} -1 & 0 \\ -1 & -1 \end{pmatrix}\quad & \textrm{for}\quad z \in (\be_j,\be_j+\ep) \\
&\begin{pmatrix} 1 & -1 \\ 0 & 1 \end{pmatrix} \quad & \textrm{for}\quad z \in (\be_j,\be_j+i \ep) \\
&\begin{pmatrix} 1 & 1 \\ 0 & 1 \end{pmatrix}  \quad & \textrm{for}\quad z \in (\be_j,\be_j-i \ep)\\
\end{aligned}\right.
\end{equation}
where orientation is taken from left to right on horizontal contours, and down to up on vertical contours according to Figure \ref{sigma_S}.
We now take
\begin{equation}\label{pmi5}
\Phi_{rs}(z)=\left\{
\begin{aligned}
&\begin{pmatrix}y_2(z) & -y_0(z) \\ y_2'(z) & -y_0'(z) \end{pmatrix} 
\quad \textrm{for} \quad \arg z \in \left(0,\frac{\pi}{2}\right) \\
&\begin{pmatrix}y_2(z) & y_1(z) \\ y_2'(z) & y_1'(z) \end{pmatrix} 
\quad \textrm{for} \quad \arg z \in \left(\frac{\pi}{2},\pi \right) \\
&\begin{pmatrix}y_1(z) & -y_2(z) \\ y_1'(z) & -y_2'(z) \end{pmatrix} 
\quad \textrm{for} \quad \arg z \in \left(-\pi,-\frac{\pi}{2}\right) \\
&\begin{pmatrix}y_1(z) & y_0(z) \\ y_1'(z) & y_0'(z) \end{pmatrix} 
\quad \textrm{for} \quad \arg z \in \left(-\frac{\pi}{2},0 \right).
\end{aligned}
\right.
\end{equation}
Then $\Phi_{rs}(z)$ solves a RHP similar to that of $\tilde{\bold Q}_N$, 
but for jumps emanating from the origin rather than from $\be_j$.

Once again,
\begin{equation}\label{pmi8}
\psi_{\be_j}(z)=-\left\{\frac{3\pi}{2}\int_z^{\be_j}\left(1-\rho(t)\right)dt\right\}^{2/3}
\end{equation}
extends to a conformal map from $D(\be_j, \ep)$ onto a convex neighborhood of the origin, with 
\begin{equation}\label{pmi9}
\psi_{\be_j}(\be_j)=0 \quad ; \quad \psi_{\be_j}'(\be_j)>0,
\end{equation}
Again, we can slightly deform the vertical pieces of the contour $\Sigma_S$ close to $\be_j$, so that
\begin{equation}\label{pm10bbb}
\psi_{\be_j}\big\{D(\be_j,\ep) \cap \Sigma_S\big\}=(-\ep,\ep) \cup (-i\ep,i\ep)
\end{equation}
We now take 
\begin{equation}\label{pmi11}
\tilde{\bold Q}_N(z)=\bold E_N^{\be_j}(z)\Phi_{rs}\big(N^{2/3}\psi_{\be_j}(z)\big)
\end{equation}
where
\begin{equation}\label{pmi12}
\begin{aligned}
&\bold E^{\be_j}_N(z)=\bold M(z)e^{\pm\frac{i\Omega_{j,N}}{2}\sigma_3}\bold L^{\be_j}_N(z)^{-1} \quad \textrm{for} \quad \pm \Im z \geq 0, \\
&\bold L^{\be_j}_N(z)=\frac{1}{2\sqrt{\pi}}\begin{pmatrix} N^{-1/6} \psi_{\be_j}^{-1/4}(z) & 0 \\ 0 & N^{1/6} \psi_{\be_j}^{1/4}(z)\end{pmatrix}\begin{pmatrix} 1 & i \\ 1 & -i \end{pmatrix},
\end{aligned}
\end{equation}
and we take the principal branch of $ \psi_{\be_j}^{1/4}$.  
The function $\Phi_{rs}(N^{2/3}\psi_{\be_j}(z))$ has the jumps $j_S$.  Similar to the prefactor $\bold E_N^{\be_j}$ at band-void end-points, the prefactor $\bold E^{\be_j}_N$ is analytic in $D(\be_j, \ep)$, thus does not change these jumps. 

We now check that $\bold U_N$ satisfies the matching condition (\ref{pmi2}).  The large $N$ asymptotics of $\Phi_{rs}(N^{2/3}\psi_{\be_j}(z))$ are given in the different regions of analyticity as follows:
\begin{equation}\label{pmi17}
\begin{aligned}
\Phi_{rs}\big(N^{2/3}\psi_{\be_j}(z)\big)&=\frac{1}{2\sqrt{\pi}}
N^{-\frac{1}{6}\sigma_3}\psi_{\be_j}(z)^{-\frac{1}{4}\sigma_3}
\bigg[\pm \begin{pmatrix} -i & -1 \\ -i & 1 \end{pmatrix}
\pm \frac{\psi_{\be_j}(z)^{-3/2}}{48N}\begin{pmatrix} -5i & 5 \\ 7i & 7 \end{pmatrix}\\
&\quad +O(N^{-2})\bigg]
e^{\frac{2}{3}N\psi_{\be_j}(z)^{3/2}\sigma_3}
\hspace{1 cm} \textrm{for} \quad \pm \Im z>0, 
\end{aligned}
\end{equation}
where we always take the principal branch of $\psi_{\be_j}(z)^{3/2}$.  As such, $\psi_{\be_j}(z)^{3/2}$ is
two-valued for $x \in (\be_j-\ep, \be_j)$, so that
\begin{equation}\label{pmi18}
\left[\frac{2}{3}\psi_{\be_j}(x)^{3/2}\right]_{\pm}=\mp\pi i\int_x^{\be_j}\left(1-\rho(t)\right)dt
=\mp \pi i(\be_j-x)\pm\pi i\int_x^{\be_j}\rho(t)dt.
\end{equation}
From (\ref{g9}) we have that
\begin{equation}\label{pmi19}
2g_\pm(x)-V(x)=l\pm 2\pi i \int_x^{\be_q} \rho(t)dt=l\pm 2\pi i \int_x^{\be_j} \rho(t)dt \pm i\Om_j \mp 2\pi i \be_j
\end{equation}
for $x\in (\be_j-\ep,\be_j)$.  These equations imply that
\begin{equation}\label{pmi19a}
\big(2g_\pm(x)-V(x)\big)-\big(2g_\pm(\be_j)-V(\be_j)\big)=\pm 2\pi i \int_x^{\be_j} \rho(t)dt.
\end{equation}
We can therefore write (\ref{pmi18}) as
\begin{equation}\label{pmi19b}
\left[\frac{2}{3}\psi_{\be_j}(x)^{3/2}\right]_{\pm}=\mp\pi i(\be_j-x)+\frac{1}{2}\bigg[\big(2g_{\pm}(x)-V(x)\big)-\big(2g_{\pm}(\be_j)-V(\be_j)\big)\bigg].
\end{equation}
We can extend these equations into the upper and lower half-plane, respectively, obtaining
\begin{equation}\label{pmi20}
\frac{2}{3}\psi_{\be_j}(z)^{3/2}=
\mp \pi i(\be_j-z)+\frac{1}{2}\bigg[\big(2g(z)-V(z)\big)-\big(2g_\pm(\be_j)-V(\be_j)\big)\bigg] \quad \textrm{for} \quad \pm \Im z>0. \\
\end{equation}
Using (\ref{pmi19}) at $x=\be_j$, we can write
\begin{equation}\label{pmi22}
\frac{2}{3}\psi_{\be_j}(z)^{3/2}=g(z)-\frac{V(z)}{2}-\frac{l}{2}\pm \pi iz \mp \frac{i(\Omega_{j,N}-\pi)}{2N} \quad \textrm{for} \quad \pm \Im z>0. 
\end{equation}
Plugging (\ref{pmi17}) and (\ref{pmi22}) into (\ref{pmi11}) gives
\begin{equation}\label{pmi22a}
\begin{aligned}
\bold U_N(z)&=\bold M(z)e^{\frac{i\Omega_{j,N}}{2}\sigma_3}\bold L_N^{\be_j}(z)^{-1}\frac{1}{2\sqrt{\pi}}
N^{-\frac{1}{6}\sigma_3}\psi_{\be_j}(z)^{-\frac{1}{4}\sigma_3}\\
&\qquad  \times \bigg[\pm \begin{pmatrix} -i & -1 \\ -i & 1 \end{pmatrix}\pm \frac{\psi_{\be_j}(z)^{-3/2}}{48N}
\begin{pmatrix} -5i & 5 \\ 7i & 7 \end{pmatrix}+O(N^{-2})\bigg] \\
&\qquad  \times e^{N(g(z)-\frac{l}{2}-\frac{V(z)}{2})\sigma_3}e^{\pm iN\pi z\sigma_3}e^{\mp\frac{i\Omega_{j,N}}{2}\sigma_3}e^{\pm\frac{i\pi}{2}\sigma_3}e^{\mp iN\pi z\sigma_3}e^{-N(g(z)-\frac{V(z)}{2}-\frac{l}{2})\sigma_3} \\
&=\bold M(z)\left[I+\frac{\psi_{\be_j}(z)^{-3/2}}{48N}
\begin{pmatrix} -1 & -6ie^{\pm i\Omega_{j,N}} \\ -6ie^{\mp i\Omega_{j,N}} & 1 \end{pmatrix}+O(N^{-2})\right] 
\quad \textrm{for} \ \pm \Im (z) >0.
\end{aligned}
\end{equation}

We can make a similar construction near $\al_j$ for $j-1 \in \acal_s$.  Let
\begin{equation}\label{pmi24}
\psi_{\al_j}(z)=-\left\{\frac{3\pi}{2}\int_{\al_j}^z\left(1-\rho(t)\right)dt\right\}^{2/3}.
\end{equation}
This function is analytic in $D(\al_j,\ep)$ and has negative derivative at $\al_j$, thus $\Im z$ and $\Im \psi_{\al_j}(z)$ have opposite signs for $z \in D(\al_j,\ep)$.  
Then the jumps for $\tilde{\bold Q}_N$ are
\begin{equation}\label{pmi24a}
j_{\tilde{Q}}(z)=\left\{
\begin{aligned}
&\begin{pmatrix} 0 & 1 \\ -1 & 0 \end{pmatrix} \quad &\textrm{for}\quad z \in (\al_j,\al_j+\ep) \\
&\begin{pmatrix} -1 & 0 \\ -1 & -1 \end{pmatrix}\quad & \textrm{for}\quad z \in (\al_j-\ep,\al_j) \\
&\begin{pmatrix} 1 & -1 \\ 0 & 1 \end{pmatrix} \quad & \textrm{for}\quad z \in (\al_j,\al_j+i\ep) \\
&\begin{pmatrix} 1 & 1 \\ 0 & 1 \end{pmatrix}  \quad & \textrm{for}\quad z \in (\al_j,\al_j-i \ep),\\
\end{aligned}\right.
\end{equation}
where the contour is oriented from left to right on horizontal segments and up to down on vertical segments according to Figure \ref{sigma_S}.  After a slight deformation of the vertical contours and the change of variables $\psi_{\al_j}$, these jumps become the following jumps close to the origin:
\begin{equation}\label{pmi24b}
j_{\tilde{Q}}(\psi_{\al_j}(z))=\left\{
\begin{aligned}
&\begin{pmatrix} 0 & 1 \\ -1 & 0 \end{pmatrix} \quad &\textrm{for}\quad \psi_{\al_j}(z) \in (-\ep,0) \\
&\begin{pmatrix} -1 & 0 \\ -1 & -1 \end{pmatrix}\quad & \textrm{for}\quad \psi_{\al_j}(z) \in (0,\ep) \\
&\begin{pmatrix} 1 & -1 \\ 0 & 1 \end{pmatrix} \quad & \textrm{for}\quad \psi_{\al_j}(z) \in (-i \ep,0) \\
&\begin{pmatrix} 1 & 1 \\ 0 & 1 \end{pmatrix}  \quad & \textrm{for}\quad \psi_{\al_j}(z) \in (0,i \ep),\\
\end{aligned}\right.
\end{equation}
where the contour is oriented from right to left on horizontal segments and down to up on vertical segments.  These jump conditions are satisfied by the function 
\begin{equation}\label{pmi25c}
\Phi_{ls}(z)=\Phi_{rs}(z)\begin{pmatrix} 1 & 0 \\ 0 & -1 \end{pmatrix}.
\end{equation}
Then we can take for $z\in D(\al_j,\ep)$, 
\begin{equation}\label{pmi25}
\bold U_N(z)=\bold M(z)e^{\frac{\pm i\Omega_{j,N}}{2}\sigma_3}\bold L_N^{\al_j}(z)^{-1}\Phi_{ls}(N^{2/3}\psi_{\al_j}(z))e^{\mp iN\pi z\sigma_3}e^{-N(g(z)-\frac{V(z)}{2}-\frac{l}{2})\sigma_3} \quad \textrm{for} \quad \pm \Im z > 0,
\end{equation}
where
\begin{equation}\label{pmi26}
\bold L_N^{\al_j}(z)=\frac{1}{2\sqrt{\pi}}
\begin{pmatrix} N^{-1/6} \psi_{\al_j}^{-1/4}(z) & 0 \\ 0 & N^{1/6} \psi_{\al_j}^{1/4}(z)\end{pmatrix}
\begin{pmatrix} -1 & i \\ -1 & -i \end{pmatrix}.
\end{equation}
We once again have 
\begin{equation}\label{pmi27}
\begin{aligned}
\Phi_{ls}\big(N^{2/3}\psi_{\al_j}(z)\big)&=\frac{1}{2\sqrt{\pi}}N^{-\frac{1}{6}\sigma_3}\psi_{\al_j}(z)^{-\frac{1}{4}\sigma_3}
\bigg[\pm \begin{pmatrix} -i & 1 \\ -i & -1 \end{pmatrix} \pm \frac{\psi_{\al_j}(z)^{-3/2}}{48N}
\begin{pmatrix} -5i & -5 \\ 7i & -7 \end{pmatrix}\\
&\quad +O(N^{-2})\bigg]
e^{\frac{2}{3}N\psi_{\al_j}(z)^{3/2}\sigma_3} 
\hspace{5mm} \textrm{for} \; \pm  \Im \psi_{\al_j}(z)>0 \; (\textrm{so} \; \mp \Im z > 0),\\
\end{aligned}
\end{equation}
and for $z \in D(\al_j,\ep)$, 
\begin{equation}\label{pmi28}
\frac{2}{3} \psi_{\al_j}^{3/2}(z)=\pm i \pi z +g(z)-\frac{V(z)}{2}-\frac{l}{2} \mp \frac{i(\Omega_{j,N}-\pi)}{2N} \quad \textrm{for} \quad \pm \Im z >0.
\end{equation}
Combining (\ref{pmi25}), (\ref{pmi27}), and (\ref{pmi28}) gives
\begin{equation}\label{pmi29}
\begin{aligned}
\bold U_N(z)&=\bold M(z)e^{\frac{\pm i\Omega_{j,N}}{2}\sigma_3}
\bold L_N^{\al_j}(z)^{-1}\frac{1}{2\sqrt{\pi}}N^{-\frac{1}{6}\sigma_3}
\psi_{\al_j}(z)^{-\frac{1}{4}\sigma_3}\\
&\qquad \times\left[\pm \begin{pmatrix} i & -1 \\ i & 1 \end{pmatrix} \pm \frac{\psi_{\al_j}(z)^{-3/2}}{48N}
\begin{pmatrix} 5i & 5 \\ -7i & 7 \end{pmatrix} +O(N^{-2})\right] \\
&\qquad \times e^{\pm i N \pi z\sigma_3} e^{N(g(z)-\frac{V(z)}{2}-\frac{l}{2})\sigma_3}
e^{\mp \frac{i\Omega_{j,N}}{2}\sigma_3}e^{\pm \frac{i\pi}{2}\sigma_3}e^{\mp iN\pi z\sigma_3}
e^{-N(g(z)-\frac{V(z)}{2}-\frac{l}{2})\sigma_3} \\
&=\bold M(z)e^{\frac{\pm i\Omega_{j,N}}{2}\sigma_3}
\left[\bold I+\frac{\psi_{\al_j}(z)^{-3/2}}{48N}\begin{pmatrix} -1 & 6i \\ 6i & 1 \end{pmatrix} +O(N^{-2})\right]
e^{\mp \frac{i\Omega_{j,N}}{2}\sigma_3} \\
&=\bold M(z)\left[\bold I+\frac{\psi_{\al_j}(z)^{-3/2}}{48N}
\begin{pmatrix} -1 & 6ie^{\pm i\Omega_{j,N}} \\ 6ie^{\mp i\Omega_{j,N}} & 1 \end{pmatrix} +O(N^{-2})\right] 
\quad \textrm{for} \ \pm \Im z >0.
\end{aligned}
\end{equation}

\section{The third and final transformation of the RHP}\label{tt}

We now consider the contour $\Sigma_X$, which consists of the circles $\partial D(\al_j, \ep)$, and $\partial D(\be_j, \ep)$, for $j= 1,\dots q$, all oriented counterclockwise, together with the parts of 
$\Sigma_S\setminus\big(\underset{j}\bigcup [\al_j,\be_j]\big)$ 
which lie outside of the disks $D(\al, \ep), \ D(\al', \ep), \ D(\be', \ep),$ and $D(\be, \ep)$, see Fig.~4.

\begin{center}
 \begin{figure}[h]
\begin{center}
   \scalebox{0.8}{\includegraphics{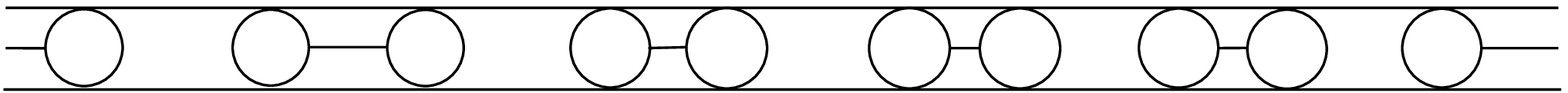}}
\end{center}
        \caption[contour $\Sigma_X$.]{The contour $\Sigma_X$ arising from the hypothetical equilibrium measure shown in Figure \ref{fig1}.}
\label{sigma_X}
    \end{figure}
\end{center}

We let
\begin{equation}\label{tt1}
\bold X_N(z)=\left\{
\begin{aligned}
&\bold S_N(z) \bold M(z)^{-1} \quad \textrm{for} \ z \ \textrm{outside the disks }  D(\al_j, \ep), \ D(\be_j, \ep) \\
&\bold S_N(z) \bold U_N(z)^{-1} \quad \textrm{for} \ z \ \textrm{inside the disks }  D(\al_j, \ep), \ D(\be_j, \ep). \\
\end{aligned}\right.
\end{equation}
Then $\bold X_N(z)$ solves the following RHP:
\begin{enumerate}
\item   
$\bold X_N(z)$ is analytic on $\C \setminus \Sigma_X$.
\item
$\bold X_N(z)$ has the jump properties
\begin{equation}\label{tt2}
\bold X_{N+}(x)=\bold X_{N-}(z)j_X(z)
\end{equation}
where
\begin{equation}\label{tt3}
j_X(z)=\left\{
\begin{aligned}
&\bold M(z)\bold U_N(z)^{-1} \quad \textrm{for} \ z \ \textrm{on the circles} \\
&\bold M(z)j_S\bold M(z)^{-1} \quad \textrm{otherwise}.
\end{aligned}\right.
\end{equation}
\item
As $z\to\infty$, 
\begin{equation}\label{tt4}
\bold X_N(z)\sim \I+\frac {\bold X_1}{z}+\frac {\bold X_2}{z^2}+\ldots
\end{equation}
\end{enumerate}
Additionally, we have that $j_X(z)$ is uniformly close to the identity in the following sense:
\begin{equation}\label{tt5}
j_X(z)=\left\{
\begin{aligned}
&\bold I+O(N^{-1}) \quad \textrm{uniformly on the circles} \\
&\bold I+O(e^{-C(z)N}) \quad \textrm{on the rest of} \ \Sigma_X, 
\end{aligned}\right.
\end{equation}
where $C(z)$ is a positive, continuous function satisfying (\ref{in3}).
If we set
\begin{equation}\label{tt6}
j_X^0(z)=j_X(z)-I,
\end{equation}
then (\ref{tt5}) becomes
\begin{equation}\label{tt7}
j_X^0(z)=\left\{
\begin{aligned}
&O(N^{-1}) \quad \textrm{uniformly on the circles} \\
&O(e^{-C(z)N}) \quad \textrm{on the rest of} \ \Sigma_X.
\end{aligned}\right.
\end{equation}
The solution to the RHP for $\bold X_N$ is based on the following lemma:
\begin{lem}
Suppose v(z) is a function on $\Sigma_X$ solving the equation
\begin{equation}\label{tt8}
v(z)=\bold I-\frac{1}{2\pi i} \int_{\Sigma_X} \frac{v(u)j_X^0(u)}{z_--u}du \quad \textrm{for} \ z\in\Sigma_X
\end{equation}
where $z_-$ means the value of the integral on the minus side of $\Sigma_X$.  Then 
\begin{equation}\label{tt9}
\bold X_N(z)=\bold I-\frac{1}{2\pi i} \int_{\Sigma_X} \frac{v(u)j_X^0(u)}{z-u}du \quad \textrm{for} \ z\in\C\setminus\Sigma_X
\end{equation}
solves the RHP for $\bold X_N$.
\end{lem}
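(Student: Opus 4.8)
The plan is to recognize the statement as the standard conversion of a Riemann--Hilbert problem whose jump matrix is close to the identity into a singular integral equation on the jump contour, exactly as carried out in \cite{DKMVZ} and \cite{BKMM}. First I would simply \emph{define} $\bold X_N(z)$ by the right-hand side of (\ref{tt9}) and check the three requirements of the RHP for $\bold X_N$ one at a time. Analyticity of $\bold X_N$ in $\C\setminus\Sigma_X$ is immediate, since (\ref{tt9}) is a Cauchy-type integral over $\Sigma_X$; here $v\,j_X^0\in L^1(\Sigma_X)$ because $v\in\bold I+L^2(\Sigma_X)$ and, by (\ref{tt7}) together with the growth (\ref{in3}) of $C(z)$, one has $j_X^0\in L^1(\Sigma_X)\cap L^2(\Sigma_X)\cap L^\infty(\Sigma_X)$. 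The normalization at infinity (\ref{tt4}) follows by expanding $\frac{1}{z-u}=\sum_{k\ge1}u^{k-1}z^{-k}$ for $|z|$ large, which gives $\bold X_N(z)=\bold I+\sum_{k\ge1}z^{-k}\bold X_k$ with $\bold X_k=-\frac{1}{2\pi i}\int_{\Sigma_X}u^{k-1}v(u)j_X^0(u)\,du$; in particular $\bold X_N(z)\to\bold I$ as $z\to\infty$.

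The substance of the lemma is the jump relation (\ref{tt2})--(\ref{tt3}). Introduce the Cauchy operator $\ccal(h)(z)=\frac{1}{2\pi i}\int_{\Sigma_X}\frac{h(u)}{u-z}\,du$ and denote by $\ccal_\pm$ its boundary values from the $\pm$ side of $\Sigma_X$; the Sokhotski--Plemelj formula gives $\ccal_+-\ccal_-=\mathrm{Id}$ on $\Sigma_X$. Then $\bold X_N=\bold I+\ccal(v\,j_X^0)$, and taking the limit in this identity from the minus side of $\Sigma_X$ and comparing with the integral equation (\ref{tt8}) yields $\bold X_{N-}(z)=\bold I+\ccal_-(v\,j_X^0)(z)=v(z)$. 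Consequently
\begin{align*}
\bold X_{N+}(z) &= \bold I+\ccal_+(v\,j_X^0)(z) = \bold X_{N-}(z)+v(z)\,j_X^0(z) \\
&= v(z)\bigl(\bold I+j_X^0(z)\bigr) = v(z)\,j_X(z) = \bold X_{N-}(z)\,j_X(z),
\end{align*}
which is precisely (\ref{tt2}) with $j_X$ as in (\ref{tt3}). Hence $\bold X_N$ given by (\ref{tt9}) solves the RHP for $\bold X_N$.

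The one point that truly requires care (rather than being a genuine obstacle) is the functional-analytic setting on $\Sigma_X$, which is only piecewise smooth and has self-intersection points where the circles $\partial D(\al_j,\ep)$ and $\partial D(\be_j,\ep)$ meet the remaining arcs of $\Sigma_S$. Thus the $L^2$-boundedness of $\ccal$ and the Sokhotski--Plemelj relation must be used in the form valid for such contours, within the Beals--Coifman framework already employed in \cite{DKMVZ} and \cite{BKMM}. Since $j_X^0\in L^2(\Sigma_X)\cap L^\infty(\Sigma_X)$, the product $v\,j_X^0$ stays in the appropriate space for any $v\in\bold I+L^2(\Sigma_X)$, and the computation above is legitimate.

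Finally I would append a short remark that a solution $v$ of (\ref{tt8}) actually exists for all sufficiently large $N$, even though the lemma only presupposes that one is given. Writing (\ref{tt8}) as $(\mathrm{Id}-\ccal_{j_X^0})\,v=\bold I$ with $\ccal_{j_X^0}\,h:=\ccal_-(h\,j_X^0)$, estimate (\ref{tt7}) gives $\|\ccal_{j_X^0}\|=O(N^{-1})$, so $\mathrm{Id}-\ccal_{j_X^0}$ is invertible by a Neumann series and $v=(\mathrm{Id}-\ccal_{j_X^0})^{-1}\bold I\in\bold I+L^2(\Sigma_X)$. This existence statement is logically independent of the lemma but is what makes it usable in the subsequent estimation of $\bold X_N$ and of its moments $\bold X_1,\bold X_2$.
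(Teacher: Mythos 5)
Your proof is correct and follows essentially the same route as the paper's: you use the Sokhotski--Plemelj additive jump of the Cauchy transform together with the integral equation to identify $\bold X_{N-}=v$ and then deduce $\bold X_{N+}=\bold X_{N-}j_X$, plus the large-$z$ expansion for the normalization. The paper's proof is a terser version of exactly this argument, and your concluding remark on existence via Neumann series mirrors the paper's subsequent perturbation-series construction.
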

The proof of this lemma is immediate from the jump property of the Cauchy transform.  By assumption
\begin{equation}\label{tt12}
\bold X_{N-}(z)=v(z) 
\end{equation}
and the additive jump of the Cauchy transform gives
\begin{equation}\label{tt13}
\bold X_{N+}(z)-\bold X_{N-}(z)=v(z)j_X^0(z)=\bold X_{N-}(z)j_X^0(z)
\end{equation}
thus $\bold X_{N+}(z)=\bold X_{N-}(z)j_X(z)$.  Asymptotics at infinity are given by (\ref{tt9}).

The solution to equation (\ref{tt8}) is given by a series of perturbation theory.  Namely, the solution is
\begin{equation}\label{tt14}
v(z)=\bold I+\sum_{k=1}^\infty v_k(z)
\end{equation}
where 
\begin{equation}\label{tt15}
v_k(z)=-\frac{1}{2\pi i} \int_{\Sigma_X} \frac{v_{k-1}(u)j_X^0(u)}{z-u}du \quad ; \quad v_0(z)=I.
\end{equation}
This function clearly solves (\ref{tt8}) provided the series converges, which it does, for sufficiently large $N$.  Indeed, by (\ref{tt5}), 
\begin{equation}\label{tt16}
|v_k(z)| \leq \left(\frac{C}{N}\right)^k\frac{1}{1+|z|} \quad \textrm{for some constant} \ C>0\,,
\end{equation}
thus the series (\ref{tt14}) is dominated by a convergent geometric series and thus converges absolutely.  This in turn gives
\begin{equation}\label{tt17}
\bold X_N(z)=\bold I+\sum_{k=1}^\infty\bold X_{N,k}(z)
\end{equation}
where
\begin{equation}\label{tt18}
\bold X_{N,k}(z)=-\frac{1}{2\pi i}\int_{\Sigma_X}\frac{v_{k-1}(u)j_X^0(u)}{z-u}du.
\end{equation}
In particular, this implies that 
\begin{equation}\label{tt19}
\bold X_N \sim \bold I + O\left(\frac{1}{N(|z|+1)}\right) \quad  \textrm{as} \quad N \to \infty 
\end{equation}
uniformly for $z\in \C \setminus \Sigma_X$.

\section{Proof of theorems \ref{coeff}-\ref{band_sat-reg}}\label{proof}
The transformations (\ref{red22}), (\ref{ft1}), (\ref{st1}), (\ref{tt1}) give that, for $z$ bounded away from the real line,
\begin{equation}\label{as1}
\bold P_N(z)=\bold K_N^{-1}e^{\frac{Nl}{2}\sg_3} \bold X_N(z) \bold M(z) e^{N(g(z)-\frac{l}{2})\sg_3}\bold K_N,
\end{equation}
and for $z$ close to the real line but bounded away from the support of the equilibrium measure, 
\begin{equation}\label{as1a}
\bold P_N(z)=\bold K_N^{-1}e^{\frac{Nl}{2}\sg_3} \bold X_N(z) \bold M(z) e^{N(g(z)-\frac{l}{2})\sg_3}\bold K_N \bold D_{\pm}^u(z)^{-1} \quad \textrm{for} \quad \pm \Im z \geq 0.
\end{equation}
Expanding (\ref{as1}) or (\ref{as1a}), we get that
\begin{equation}\label{as1b}
P_N(z)=[\bold P_N(z)]_{11}=e^{Ng(z)}\left([\bold M]_{11}[\bold X]_{11}+[\bold M]_{21}[\bold X]_{12}\right)
\end{equation}
which, along with (\ref{tt19}), proves Theorem \ref{voids}.

The proof of Theorem \ref{coeff} requires only the formulae (\ref{IP2}), (\ref{IP6}), and (\ref{IP7}), and a straightforward large-$z$ expansion of equation (\ref{as1}). 


Similar to (\ref{as1}), we have that, for any interval $J$ which is contained in and bounded away from the endpoints of a band, in some neighborhood of $J$, we have
\begin{equation}\label{as3}
\bold P_N(z)=\left \{
\begin{aligned}
&\bold K_N^{-1}e^{\frac{Nl}{2}\sg_3} \bold X_N(z) \bold M(z) j_+(z) e^{N(g(z)-\frac{l}{2})\sg_3}\bold K_N \bold D_+^u(z)^{-1} \quad &\textrm{for} \quad \Im z &\geq 0    \\
&\bold K_N^{-1}e^{\frac{Nl}{2}\sg_3} \bold X_N(z) \bold M(z) j_-^{-1}(z) e^{N(g(z)-\frac{l}{2})\sg_3}\bold K_N \bold D_-^u(z) ^{-1}\quad &\textrm{for} \quad \Im z &\leq 0,
\end{aligned}\right.
\end{equation}
Expanding the left side of this equation for $\Im z \geq 0$, utilizing (\ref{g9}), and taking limits as $z$ approaches the real line, we get that
\begin{equation}\label{as3a}
P_N(x)=[\bold P_N(x)]_{11}=e^{\frac{N}{2}(V(x)+l)}\left(e^{iN\pi \phi(x)}[\bold M_{11}]_+(x)+e^{-iN\pi \phi(x)}[\bold M_{12}]_+(x)+O(N^{-1})\right)
\end{equation}
where $\phi(x)$ is as defined in (\ref{mr7}), and the $+$ subscript indicates the limiting value from the upper half plane.  Notice that $[\bold M_{12}]_+=[\bold M_{11}]_-$ in this region, and that $\bold M_{11}(\overline{z})=\overline{\bold M_{11}(z)}$.  This implies that $[\bold M_{12}]_+(x)=\overline{[\bold M_{11}]_+(x)}$, and thus we can write (\ref{as3a}) as
\begin{equation}\label{as3b}
P_N(x)=e^{\frac{N}{2}(V(x)+l)}\left(e^{iN\pi \phi(x)}[\bold M_{11}]_+(x)+\overline{e^{iN\pi \phi(x)}[\bold M_{11}]_+(x)}+O(N^{-1})\right),
\end{equation}
which proves Theorem \ref{bands}.

For any interval $J$ which is contained in and bounded away from the endpoints of a saturated region, in some neighborhood of $J$, we have
\begin{equation}\label{as4}
\bold P_N(z)=
\bold K_N^{-1}e^{\frac{Nl}{2}\sg_3} \bold X_N(z) \bold M(z) \bold A_{\pm}^{-1}(z) e^{N(g(z)-\frac{l}{2})\sg_3}\bold K_N \bold D_{\pm}^l(z)^{-1} \quad \textrm{for} \quad \pm \Im z> 0.   
\end{equation}
Notice that in this region, we can write
\begin{equation}\label{as5}
g_{\pm}(x)=L(x)\pm \frac{i\Om_j}{2} \mp i\pi x
\end{equation}
where $L(x)$ is defined in (\ref{mr9}).  Notice also that $2g_{\pm}(x)-V(x)-l$ has positive real part.  Expanding (\ref{as4}) for $\Im z >0$ and taking the limit as $z$ approaches the real line gives
\begin{equation}\label{as6}
\begin{aligned}
\bold P_N(x)_{11}&=e^{Ng_+(x)}\left[(1-e^{2\pi i N x})(\bold M_{11} \bold X_{11} +\bold M_{21} \bold X_{12}) + e^{-N(2g_+(x)-V(x)-l)}(\bold M_{12} \bold X_{11} +\bold M_{22} \bold X_{12})\right] \\
&=e^{Ng_+(x)}\left[(1-e^{2\pi i N x})(\bold M_{11} \bold X_{11} +O(N^{-1})) + O(e^{-N\de})\right] \\
&=e^{NL(x)}\left[-2i \sin (\pi N x)e^{\frac{iN\Om_j}{2}}[\bold M_{11}]_+(x) (1 +O(N^{-1})) + O(e^{-N\de})\right],
\end{aligned}
\end{equation}
which proves Theorem \ref{satreg}.

Similarly, at the turning points $\al_j$ and $\be_j$, explicit formulae can be written for $\bold P_N$ in terms of explicit transformations in each sector of analyticity of the local parametrix.  From these formulae and the properties of the $g$-function, Theorems \ref{band_void} and \ref{band_sat-reg} are almost immediate, with Theorem \ref{band_sat-reg} also requiring the identities (see, e.g. \cite{Ol})
\begin{equation}\label{as7}
\begin{aligned}
y_1(z)&=-\frac{1}{2}\big(\Ai(z)-i \Bi (z)\big) \\
y_2(z)&=-\frac{1}{2}\big(\Ai(z)+i \Bi (z)\big). 
\end{aligned}
\end{equation}

\end{document}